\documentclass[runningheads]{llncs}
\usepackage{array}
\usepackage{booktabs}
\usepackage{multirow}
\usepackage{amsmath}
\usepackage[T1]{fontenc}
\usepackage{xcolor}
\usepackage{amsmath}
\usepackage{graphicx}
\usepackage{subcaption}
\usepackage{orcidlink}

\usepackage{graphicx}
\usepackage{mathtools}
\usepackage{amssymb}
\usepackage{cleveref}

\usepackage{amsmath, amsthm, amssymb}
\newif\ifextendedversion
\extendedversiontrue

\theoremstyle{definition} 
\newtheorem{assumption}{Assumption}[section]

\newcommand{\samethanks}[1][\value{footnote}]{\footnotemark[#1]}

\begin{document}

\title{Understanding Federated Learning Through the Lens of Mechanism Design: The Role of Data Heterogeneity}

\titlerunning{Understanding Federated Learning through Mechanism Design}

\author{Lina Alkarmi\thanks{Equal contribution.}\inst{1}\orcidlink{0000-0003-3097-4781} \and
Po-Yen Chen\samethanks\inst{1}\orcidlink{0009-0003-6079-1182} \and
Mingyan Liu\inst{1}\orcidlink{0000-0003-3295-9200}}

\institute{University of Michigan, Ann Arbor, MI 48109, USA \\
\email{\{lalkarmi, cpoyen, mingyan\}@umich.edu}}
\authorrunning{L. Alkarmi et al.}

\maketitle        
\begin{abstract}
Federated learning (FL) requires effective incentive mechanisms to motivate data sharing and prevent strategic free-riding. Recent FL mechanisms such as the Shapley value mechanism $\mathcal{M}^\text{Shap}$ guarantee reciprocal fairness for agents. However, a complete analysis of how such mechanisms impact social optimality and individual rationality under realistic, standalone outside options remains unknown. In this paper, we address this gap by adapting the classical Externality mechanism $\mathcal{M}^E$ to the federated learning setting. We conduct a comparison of $\mathcal{M}^\text{Shap}$ and $\mathcal{M}^E$ across three dimensions: social optimality, individual rationality, and fairness/reciprocity. First, we establish that $\mathcal{M}^\text{Shap}$ generally does not maximize social welfare because its marginal incentives drive agents to over-contribute resources, while $\mathcal{M}^E$ maximizes social welfare by design. Second, we evaluate participation incentives through the individual rationality gap when considering agents' outside options as standalone training on their own data. We find that both mechanisms ensure individual rationality in homogeneous settings. We further show that under mild conditions, $\mathcal{M}^E$ maintains this guarantee under agent heterogeneity, whereas $\mathcal{M}^\text{Shap}$ does not. Third, we demonstrate that while $\mathcal{M}^\text{Shap}$ maintains perfect reciprocity by design, $\mathcal{M}^E$ generally does not, and only ensures that individual benefits match Shapley contributions at symmetric equilibria under homogeneity, as it sacrifices individual fairness to maximize collective welfare under heterogeneity. Empirical simulations validate our theoretical findings and illustrate a tradeoff between reciprocal fairness and social efficiency.

\keywords{Federated learning, data privacy and security, mechanism design, Nash implementation}
\end{abstract}
\section{Introduction}

Federated learning (FL) is a collaborative machine learning framework in which a group of agents with private datasets jointly train a shared model via a central coordinating server, without directly sharing raw data \cite{pmlr-v54-mcmahan17a}, therefore preserving data privacy. A central challenge in FL is incentivizing agents to contribute meaningfully. Data sharing is costly, and strategic agents may free-ride on the contributions of others, leading to suboptimal learning outcomes. In fact, \cite{karimireddy2022mechanisms} show that giving agents unconditional access to the jointly trained model leads to free-riding where almost no agent contributes any data. Similarly, \cite{9746813} models FL as a public goods game and proves that a social dilemma exists in which selfish behavior at the Nash equilibrium leads to a loss of social welfare. Moreover, the training data may not be homogeneous across agents. In particular, agents may face heterogeneous privacy requirements, leading them to inject varying levels of noise into their contributions and resulting in heterogeneous data quality, a practically important source of agent heterogeneity in FL. These observations motivate the need for carefully designed incentive mechanisms.

The mechanism design perspective of FL addresses this challenge by designing payment schemes that align individual incentives with a collective goal. Here, a mechanism specifies a payment to each agent as a function of the collaboration profile of all agents. By receiving a payment that depends on the agents' contributions, each agent has a financial incentive to contribute data beyond what they would contribute without payment. A desirable mechanism should induce a simultaneous-move game whose Nash equilibria satisfy several well-studied
properties. It should be \emph{budget-balanced}, 
meaning the server operates at no profit or loss. 
It should be \emph{socially optimal}, meaning the 
implemented contribution profile maximizes total 
welfare across all agents. It should be 
\emph{individually rational}, meaning no agent is 
made worse off by participating than training alone. Beyond these commonly considered properties, it may also be desirable for an equilibrium to be {\em fair}, which generally stipulates how much each agent contributes relative to how much they each benefit. Notions of fairness are known to often come at the expense of social welfare \cite{Bertsimas_fairness,caragiannis2012efficiency}, and are less studied in the literature of using mechanism design to solve resource allocation problems. On the other hand, FL is {\em not} a typical resource allocation problem and fairness appears important: here agents contribute a tangible, measurable commodity (data), making it natural to ask not just whether agents are willing to participate, but whether the resulting distribution of benefits corresponds to what each agent put in.

The mechanism design perspective of FL is particularly useful because it unlocks a toolkit of standard mechanisms from economics and game theory, developed to solve analogous problems in public goods provision, resource allocation, and network design. A natural question then arises: which of these classical mechanisms are appropriate for FL, and what properties do they guarantee?

Recent work has made significant progress on this front. The mechanism $\mathcal{M}^\text{Shap}$, proposed by \cite{pmlr-v267-murhekar25a}, is a Shapley value-based mechanism that guarantees each agent's net benefit reflects their marginal contribution to the federation, a notion of {\em reciprocal fairness}, while maintaining budget balance and admitting Nash equilibria reachable by best-response dynamics. Despite these attractive properties, two fundamental questions remain open. First, does $\mathcal{M}^\text{Shap}$ implement a socially optimal contribution profile, the one that maximizes total welfare across agents? Second, does it guarantee that every agent is better off participating than training alone?

In this paper, we investigate these questions by adapting a classical mechanism from public goods theory, the Externality mechanism (which we will refer to as $\mathcal{M}^E$) of \cite{hurwicz1979outcome}, to the FL setting, and conducting an in-depth comparison with $\mathcal{M}^\text{Shap}$ across three dimensions: social optimality, individual rationality, and reciprocity. $\mathcal{M}^E$ is a natural candidate for FL because it is designed to internalize the externalities of data sharing, aligning each agent's private incentives with the socially optimal outcome. We note that unlike \cite{9746813}, which models the global model as a {\em non-excludable} public good, we model FL as an {\em excludable} good where non-participating agents do not benefit from others' data contributions. By applying the Nash implementation framework of \cite{sharma2012localpublicgoodprovisioning}, we establish that $\mathcal{M}^E$ admits a game form in which the socially optimal contribution profile emerges as a Nash equilibrium (NE). 
Our main contributions are as follows.

\begin{enumerate}
    \item \textbf{Adaptation of the Externality mechanism to FL.} We adapt the Externality mechanism to the FL setting with heterogeneous data quality weights, deriving the payment rule and establishing Nash implementation via the framework of \cite{sharma2012localpublicgoodprovisioning}. To our knowledge, this is the first application of this classical mechanism to FL.

    \item \textbf{Social optimality.} We show that by construction, $\mathcal{M}^E$ implements the social optimum at every NE, while $\mathcal{M}^{\text{Shap}}$ cannot implement the social optimum in general. This follows from a fundamental incompatibility between the Shapley marginal incentive and the social marginal incentive, which we establish formally.

    \item \textbf{Individual rationality.} We introduce the Individual Rationality (IR) gap $\Delta_i(\mathcal{M}) = u_i(\vec{s}^*) - v^*_i$ as a quantitative measure of each agent's participation incentive, where $v^*_i$ is the utility agent $i$ could achieve by training alone (outside the mechanism). We show that under mild conditions, in the homogeneous setting where agents have similar types of data and cost, both mechanisms guarantee a non-negative gap, meaning both are able to incentivize agents to participate and share data. Under data heterogeneity, a difference between the two mechanisms emerges: $\mathcal{M}^E$ maintains a non-negative IR gap as long as agents' data does not harm one another, while $\mathcal{M}^{\text{Shap}}$ can suffer from negative IR gaps, meaning it can fail to incentivize an agent's participation even when its data benefits others.

    \item \textbf{Reciprocity.} We show that while $\mathcal{M}^{\text{Shap}}$ achieves perfect reciprocity by design across all settings, $\mathcal{M}^E$ generally does not. Instead, $\mathcal{M}^E$ ensures that each agent's total benefit exactly matches their Shapley contribution specifically at the symmetric Nash equilibria of the homogeneous case.
    
\end{enumerate}

Together, these results demonstrate the tradeoff 
between reciprocal fairness and social efficiency in 
FL mechanism design, and provide practitioners with 
a basis for choosing between the two 
mechanisms depending on their goals. The rest of this paper is organized as follows. Section \ref{sec:formulation} defines the problem and mechanism design perspective of FL. Section \ref{sec:mechanisms} introduces the two mechanisms we will compare. Sections \ref{sec:SO}, \ref{sec:IR}, \ref{sec:recip} analyze social optimality, individual rationality, and reciprocity, respectively. Section \ref{sec:sim} provides simulation results, and Section \ref{sec:discussion} analyzes the case where agents' data could cause harm to other agents. Finally, Section \ref{sec:conclusion} concludes the paper.
\section{Problem Formulation}
\label{sec:formulation}

We follow the problem formulation of \cite{karimireddy2022mechanisms,pmlr-v267-murhekar25a}. Consider a system of $n$ agents, indexed by the set $N = [n]$, collaborating to train a shared machine learning model via a central coordinating server. Each agent $i \in N$ contributes $s_i \in S_i = [0, \tau_i]$ data samples, and we denote the joint contribution vector by $\vec{s} = (s_1, \dots, s_n) \in \mathcal{S} := \bigtimes_j S_j$. 

Each agent $i$ receives a \emph{payoff} $a_i(\vec{s}) \in \mathbb{R}_{\geq0}$ (which quantifies the performance or accuracy of the federated model on player $i$'s local task) and incurs a \emph{cost} $c_i(s_i) \in \mathbb{R}_{\geq0}$ associated with contributing data. Without a mechanism, an agent's base utility is simply $v_i(\vec{s}) = a_i(\vec{s}) - c_i(s_i)$. Following \cite{pmlr-v267-murhekar25a}, we assume throughout that each payoff function $a_i$ is bounded, non-decreasing, and concave in $\vec{s}$. A \emph{payment scheme} $p$ assigns each agent $i$ a payment $p_i(\vec{s}) \in \mathbb{R}$ at a contribution vector $\vec{s}$, resulting in the following utility for agent $i$:

\begin{equation}
    u_i(\vec{s}) = a_i(\vec{s}) - c_i(s_i) + p_i(\vec{s}).
    \label{eq:utility}
\end{equation}

Agents are utility maximizers and strategically choose their contributions. 
For a payment scheme, a \emph{Nash equilibrium} (NE) is a contribution vector $\vec{s}^* \in \mathcal{S}$ such that no agent can improve their utility by unilaterally deviating \cite{mas1995microeconomic,Nisan2007algorithmic}. 

\begin{equation}
    u_i(\vec{s}^*) \geq u_i(s_i', \vec{s}_{-i}^*) \quad 
    \forall i \in N, \quad \forall s_i' \in S_i.
    \label{eq:NE}
\end{equation}

\subsection{Payoff and Cost}  
To capture the asymmetric data relationships that arise in practical federated learning settings, we consider a heterogeneous payoff model where agents' data contributions have varying relevance to one another: 

\begin{equation}
    a_i(\vec{s}) = 1 - (\sum_{j \in N} w_{ij} s_j + 1)^{-\beta},
    \label{eq:payoff}
\end{equation}
where $\beta \in (0, 1]$ controls the rate of diminishing returns on data, and $w_{ij}$ is a relevance weight capturing how useful agent $j$'s data is to agent $i$. We define the {\em weighted contribution} received by agent $i$, including its own, as $ \xi_i(\vec{s}) = \sum_{j \in N} w_{ij} s_j + 1$. The $+1$ in $\xi_i(\vec{s})$ normalizes the model so that $a_i(\vec{0}) = 0$, ensuring zero payoff when no data is contributed by any agent, while keeping $a_i(\vec{s})$ defined and bounded in $[0,1)$ for all $\vec{s} \geq 0$. 

\begin{assumption}
\label{assump:weights}
For all $i, j \in N$, $w_{ij} \geq 0$, with $w_{ii} = 1$. Moreover, for
every $j \in N$, at least three agents derive positive value from agent
$j$'s data, i.e. $|\{i \in N : w_{ij} > 0\}| \geq 3$. 
\end{assumption}

The non-negativity of $w_{ij}$ reflects that no  agent's data harms another's model, while $w_{ii}=1$  normalizes each agent's self-relevance. We  discuss in Section \ref{sec:discussion} what happens when this ``no-harm'' assumption is relaxed. The  requirement that at least three agents derive  positive value from each agent $j$'s data is needed  to apply $\mathcal{M}^E$ following the  framework of \cite{sharma2012localpublicgoodprovisioning}. For cases where $|\{i \in N : w_{ij} > 0\}| < 3$, the Nash implementation framework of \cite{sharma2012localpublicgoodprovisioning} can no longer be applied. Addressing such cases would likely require a different approach outside the Nash implementation, for instance a direct negotiation between the few beneficiaries involved, and we leave this extension to future work. 

Each agent incurs a linear cost $c_i(s_i) = \gamma_i s_i$, where $\gamma_i > 0$ is agent $i$'s marginal cost of contributing data. Note that when $w_{ij} = 1$ for all $i, j \in N$, this formulation specializes to the  setup in \cite{pmlr-v267-murhekar25a}, where every agent benefits equally from every unit of contributed data. The model presented above is thus a more general one.

\subsection{Desirable Properties}
\label{sec:desirable}

We focus on four properties: budget 
balance (BB), social optimality (SO), and individual rationality (IR)
are standard \cite{mas1995microeconomic,Nisan2007algorithmic}; reciprocity was introduced 
by \cite{pmlr-v267-murhekar25a}.
We define each in the context of the mechanism model outlined earlier and {\em every} NE $s^{*}$ of the induced game.

\begin{definition}[Budget Balance]
\label{def:BB}
A mechanism $\mathcal{M}$ is weakly budget-balanced (WBB) (resp. budget-balanced (BB)) if $\sum_{i \in N} p_i(\vec{s}^{*}) \leq 0$ (resp. $= 0$).
\end{definition}

\begin{definition}[Social Optimality]
\label{def:SO}
The total welfare at a contribution profile $\vec{s} \in 
\mathcal{S}$ is:
\begin{equation}
    W(\vec{s}) := \sum_{i \in N} \left[ a_i(\vec{s}) - c_i(s_i) 
    \right] ,
    \label{eq:welfare}
\end{equation}
which equals $\sum_{i \in N} u_i(\vec{s})$ under any BB mechanism. A contribution profile $\vec{s}^* \in \mathcal{S}$ is socially optimal if it maximizes total welfare:
\begin{equation}
    \vec{s}^* \in \arg\max_{\vec{s} \in \mathcal{S}} W(\vec{s}).
    \label{eq:SO_def}
\end{equation}
A mechanism $\mathcal{M}$ achieves social optimality (SO) if the contribution profile at every Nash equilibrium is socially optimal.
\end{definition}

\begin{definition}[Individual Rationality]
\label{def:IR}
Agent $i$'s maximum standalone utility $v_i^*$ is 
\begin{equation}
    v_i^* = \max_{s_i \in S_i} \left[ 
    a_i(s_i, \mathbf{0}_{-i}) - c_i(s_i) \right],
    \label{eq:standalone}
\end{equation}
where $\mathbf{0}_{-i}$ denotes the zero contribution 
vector for all agents other than $i$. A mechanism 
$\mathcal{M}$ is individually rational (IR) if every 
agent prefers participating over training alone at 
every NE:
\begin{equation}
    u_i(\vec{s}^*) \geq v_i^*,  \quad \forall \vec{s}^* 
    \in \mathrm{NE}(\mathcal{M}), \quad \forall i \in N.
\end{equation}
\end{definition}

\begin{remark}
    In FL, agents always retain the option to train independently on their own data, making $v_i^*$ the natural outside option. This refines the standard IR condition of \cite{pmlr-v267-murhekar25a}, which requires $u_i(\vec{s}^*) \geq 0$ for all $\vec{s}^* \in \mathrm{NE}(\mathcal{M})$ and all $i \in N$. Since $v_i^* \geq a_i(\vec{0}) - c_i(0) \geq 0$, our condition is at least as strong, and the two coincide when agents have no meaningful standalone option.
\end{remark}

The final property, reciprocity, compares each agent's benefit from the mechanism to its contribution, measured by the Shapley value. The \emph{Shapley value} of agent $i$ at contribution vector $\vec{s}$, denoted $\varphi_i^A(\vec{s})$, measures agent $i$'s average marginal contribution to the total payoff $A(\vec{s}) := \sum_{i \in N} a_i(\vec{s})$ across all possible coalition orderings \cite{pmlr-v267-murhekar25a,Nisan2007algorithmic}, and is defined as:

\begin{equation}
    \varphi_i^A(\vec{s}) = \frac{1}{n} \sum_{X \subseteq N \setminus \{i\}} 
    \binom{n-1}{|X|}^{-1} \left[ A(\vec{s}[X \cup \{i\}]) - 
    A(\vec{s}[X]) \right],
    \label{eq:shapley}
\end{equation}
where $\vec{s}[X]$ denotes the contribution vector $\vec{s}$ restricted to agents in $X$, i.e. $\vec{s}[X]_j = s_j$ for $j \in X$ and $\vec{s}[X]_j = 0$ for $j \notin X$. A well-known property of the Shapley value is that it exactly distributes total payoff among agents: $\sum_{i \in N} \varphi_i^A(\vec{s}) = A(\vec{s})$ for all $\vec{s} \in \mathcal{S}$. 

\begin{definition}[Reciprocity~{\cite{pmlr-v267-murhekar25a}}]
\label{def:reciprocity}
The reciprocity of a mechanism $\mathcal{M}$ is:
\begin{equation}
    \mathrm{Reciprocity}(\mathcal{M}) := \min_{\vec{s} \in 
    \mathrm{NE}(\mathcal{M})} \min_{i \in N} 
    \frac{a_i(\vec{s}) + p_i(\vec{s})}{\varphi_i^A(\vec{s})}.
    \label{eq:reciprocity}
\end{equation}
A mechanism is fully reciprocal if $\mathrm{Reciprocity}(\mathcal{M}) 
= 1$, meaning every agent's total benefit is at least their Shapley contribution at 
every Nash equilibrium. 

\end{definition}

For $\vec{s}^* \in \mathrm{NE}(\mathcal{M})$, we write $r(\vec{s}^*) := \min_{i \in N} \frac{a_i(\vec{s}^*) + p_i(\vec{s}^*)}{\varphi_i^A(\vec{s}^*)}$ for the reciprocity attained at $\vec{s}^*$, so that $\mathrm{Reciprocity}(\mathcal{M}) = \min_{\vec{s}^* \in \mathrm{NE}(\mathcal{M})} r(\vec{s}^*)$.
\section{The Two Mechanisms}
\label{sec:mechanisms}

In this section we introduce the two mechanisms we study and compare throughout the paper: $\mathcal{M}^{\text{Shap}}$, a Shapley value based mechanism from \cite{pmlr-v267-murhekar25a} that prioritizes reciprocal fairness, and the Externality mechanism $\mathcal{M}^{\text{E}}$, adapted from \cite{hurwicz1979outcome,TIFS_parinaz,sharma2012localpublicgoodprovisioning}, that prioritizes social efficiency. Both mechanisms are BB by construction. 

\subsection{The $\mathcal{M}^{\text{Shap}}$ Mechanism}
\label{sec:mshap}
$\mathcal{M}^{\text{Shap}}$ assigns 
each agent $i$ a payment equal to their Shapley share minus their 
direct payoff:
\begin{equation}
 p_i^{\text{Shap}}(\vec{s}) = \varphi_i^A(\vec{s}) 
    - a_i(\vec{s}). 
    \label{eq:mshap_payment}
\end{equation}
As a result, agent $i$'s utility is simply $u_i^{\text{Shap}}(\vec{s}) = \varphi_i^A(\vec{s}) - c_i(s_i)$. 
$\mathcal{M}^{\text{Shap}}$  is fully reciprocal by construction.
It operates as a direct mechanism: the central server announces the payment rule \eqref{eq:mshap_payment} to all agents, who then best respond by choosing their contributions strategically. \cite{pmlr-v267-murhekar25a} shows that the resulting Nash equilibrium can be obtained using best-response dynamics when $a_i(\vec{s})$ is concave in $\vec{s}$ and each cost function $c_i$ is non-decreasing and convex in $s_i$ for every agent $i \in N$.

\subsection{$\mathcal{M}^E$: The Externality Mechanism}
\label{sec:mech_ext}

Unlike $\mathcal{M}^{\text{Shap}}$, which is direct, $\mathcal{M}^E$ relies on message passing: each agent $i$ sends a message $m_i = (\vec{x}^i, \vec{\pi}^i$) consisting of a proposed contribution level for every agent together with a non-negative price for each agent's data. The outcome function of \cite{sharma2012localpublicgoodprovisioning} maps a message profile $\vec{m}$ to an implemented contribution profile and a payment to each agent, in such a way that each agent faces prices it cannot influence. At a NE $\vec{m}^*$ of the resulting message game, we write $\vec{s}^*$ for the implemented contribution profile and $p_i(\vec{s}^*)$ for agent $i$'s resulting payment, so that the properties defined in Section \ref{sec:desirable} apply to $\mathcal{M}^E$ exactly as they do in direct mechanisms, with $\text{NE}(\mathcal{M}^E)$ denoting the set of contribution profiles implemented at equilibria of the message game.

Under Assumption \ref{assump:weights}, each $u_i(\vec{s})$ is concave in $\vec{s}$, each $S_i = [0, \tau_i]$ is convex and compact, and every agent's data benefits at least three agents, so our FL setting constitutes a special case of the framework of \cite{sharma2012localpublicgoodprovisioning}. It follows from Theorems~1 and 2 of \cite{sharma2012localpublicgoodprovisioning} that Nash equilibria exist and that the contribution profile implemented at every NE of the induced game is socially optimal. Since $W$ is concave but not strictly concave in general, the welfare maximizer need not be unique, and different equilibria may implement different maximizers. Throughout our analysis of $\mathcal{M}^E$ we therefore fix an arbitrary NE and denote by $\vec{s}^*$ the socially optimal contribution profile it implements. Our results hold for every equilibrium satisfying Assumption \ref{assump:interior}:

\begin{assumption}
\label{assump:interior}
We restrict attention to Nash equilibria of $\mathcal{M}^E$ whose implemented contribution profile $\vec{s}^*$ is interior, i.e. $s_i^* \in (0, \tau_i)$ for all $i \in N$.
\end{assumption}

At a NE $\vec{s}^*$, the payment to 
each agent is determined by what is referred to as \emph{personalized 
price} $l^*_{ij}$, which is the marginal utility of agent $i$ with respect to agent $j$'s data contribution \ifextendedversion using $u_i = a_i(\vec{s}) - c_i(s_i)$ and Eq.~\eqref{eq:payoff}\else using Eqns \eqref{eq:utility} and \eqref{eq:payoff}\fi. Under Assumption \ref{assump:interior}, the personalized
price has the following expression (\ifextendedversion the full derivation can be found in Appendix \ref{app:ext_deriv}\else all derivation and proof sketches can be found in the appendix, with full versions in the extended version \cite{extended_version}\fi):
\begin{equation}
    l^*_{ij} = \begin{cases} 
    \beta w_{ij} \xi_i(\vec{s}^*)^{-\beta-1} & j \neq i \\[4pt]
    \beta \xi_i(\vec{s}^*)^{-\beta-1} - \gamma_i & j = i,
    \end{cases}
    \label{eq:prices}
\end{equation}

The payment to agent $i$ is then given by:
\begin{equation}
    p_i^E(\vec{s}^*) = -\sum_{j \in N} l^*_{ij} s^*_j 
    = \gamma_i s^*_i - \beta \xi_i(\vec{s}^*)^{-\beta-1}
    \left(\xi_i(\vec{s}^*) - 1\right),
    \label{eq:ext_payment_explicit}
\end{equation}
where $\gamma_i s^*_i$ reimburses agent $i$ for its cost; the second, tax term is proportional to the total benefit agent $i$ receives from the collaboration. Payments under $\mathcal{M}^E$ sum to zero at every profile by construction of the framework
\cite{sharma2012localpublicgoodprovisioning}.

\section{An Analysis of Social Optimality}
\label{sec:SO}

By construction of the game form presented in Section \ref{sec:mech_ext}, $\mathcal{M}^E$ implements the social optimum at every Nash equilibrium. We now show that $\mathcal{M}^\text{Shap}$ never does so at an interior equilibrium (the full proof can be found in \ifextendedversion Appendix \ref{app:mshap_SO}\else the appendix, with the full version in the extended appendix \cite{extended_version}\fi).

\begin{theorem}
\label{thm:mshap_SO}
For any FL instance with payoffs \eqref{eq:payoff} and linear costs satisfying Assumption \ref{assump:weights}, no interior Nash equilibrium of $\mathcal{M}^{\text{Shap}}$ maximizes the total welfare $W$.
\end{theorem}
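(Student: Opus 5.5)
The plan is a first-order-condition comparison. At an interior Nash equilibrium $\vec{s}^*$ of $\mathcal{M}^{\text{Shap}}$, each agent's utility $u_i^{\text{Shap}}(\vec{s}) = \varphi_i^A(\vec{s}) - \gamma_i s_i$ is differentiable in $s_i$. Since $s_i^* \in (0,\tau_i)$ is a best response, we get
\begin{equation*}
\frac{\partial \varphi_i^A}{\partial s_i}(\vec{s}^*) = \gamma_i \quad \forall i \in N.
\end{equation*}
Now suppose, for contradiction, that $\vec{s}^*$ also maximizes $W$. Because $\vec{s}^*$ is interior and $W$ is differentiable, it must satisfy $\partial W/\partial s_i(\vec{s}^*) = \partial A/\partial s_i(\vec{s}^*) - \gamma_i = 0$ for every $i$. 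I will show that $\partial \varphi_i^A/\partial s_i(\vec{s}^*) > \partial A/\partial s_i(\vec{s}^*)$ for every $i$; this contradicts the two equalities, and in fact shows the Shapley incentive pushes toward over-contribution.

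\textbf{Step 1: express the Shapley marginal.} In \eqref{eq:shapley}, the term $A(\vec{s}[X])$ does not depend on $s_i$ when $i \notin X$. Hence
\begin{equation*}
\frac{\partial \varphi_i^A}{\partial s_i}(\vec{s}) = \frac{1}{n}\sum_{X \subseteq N\setminus\{i\}} \binom{n-1}{|X|}^{-1} \frac{\partial A}{\partial s_i}\bigl(\vec{s}[X\cup\{i\}]\bigr).
\end{equation*}
The weights are nonnegative and sum to one. By \eqref{eq:payoff},
\begin{equation*}
\frac{\partial A}{\partial s_i}(\vec{s}) = \sum_{k\in N} \beta w_{ki}\, \xi_k(\vec{s})^{-\beta-1}.
\end{equation*}
So the Shapley marginal is a convex combination of the social marginal evaluated at restricted profiles.

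\textbf{Step 2: monotonicity.} Since $w_{kj} \ge 0$ by Assumption~\ref{assump:weights}, restricting to a coalition can only decrease each $\xi_k$: we have $\xi_k(\vec{s}[X\cup\{i\}]) \le \xi_k(\vec{s})$. Because $t \mapsto t^{-\beta-1}$ is decreasing, every summand satisfies $\partial A/\partial s_i(\vec{s}[X\cup\{i\}]) \ge \partial A/\partial s_i(\vec{s})$. It follows that $\partial\varphi_i^A/\partial s_i(\vec{s}^*) \ge \partial A/\partial s_i(\vec{s}^*)$.

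\textbf{Step 3: strictness, the main point needing care.} I would isolate the $X = \emptyset$ term, which has positive weight $1/n$. By Assumption~\ref{assump:weights}, there is some $k \neq i$ with $w_{ki} > 0$, since at least three agents value $i$'s data. For that $k$, we have $w_{kk} = 1$, and interiority gives $s_k^* > 0$. Therefore
\begin{equation*}
\xi_k(\vec{s}^*[\{i\}]) = w_{ki}s_i^* + 1 < w_{ki}s_i^* + s_k^* + 1 \le \xi_k(\vec{s}^*).
\end{equation*}
This makes the $k$-th summand of $\partial A/\partial s_i(\vec{s}^*[\{i\}])$ strictly larger than its counterpart at $\vec{s}^*$. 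Combined with the weak inequalities for all other terms, this gives the strict inequality and hence the contradiction. The only subtlety I anticipate is confirming that the needed beneficiary $k$ is distinct from $i$ and has a strictly positive contribution. Assumption~\ref{assump:weights} guarantees the first and Assumption-style interiority of the Shapley equilibrium guarantees the second.
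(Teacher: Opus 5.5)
Your proposal is correct and follows essentially the same route as the paper's proof. Both arguments equate the interior first-order conditions $\partial \varphi_i^A/\partial s_i = \gamma_i$ and $\partial A/\partial s_i = \gamma_i$, write the Shapley marginal as a weighted average of $\partial A/\partial s_i$ over the restricted profiles $\vec{s}[X\cup\{i\}]$, obtain the weak inequality from $w_{kj}\ge 0$ and the monotonicity of $t\mapsto t^{-\beta-1}$, and get strictness from the $X=\emptyset$ term (weight $1/n$) via a beneficiary $k\neq i$ with $w_{ki}>0$ and $s_k^*>0$; the paper additionally invokes concavity of $\varphi_i^A$ in $s_i$, but, as you implicitly note, necessity of the first-order condition at an interior best response is all the contradiction requires.
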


\begin{corollary}
\label{cor:overcontribution}
For any FL instance with payoffs \eqref{eq:payoff} and linear costs satisfying Assumption \ref{assump:weights}, let $\vec{s}^{NE}$ be any interior Nash equilibrium of $\mathcal{M}^{\text{Shap}}$. Then for every welfare maximizer $\vec{s}^{\circ}$, there exists an agent $i$ with $s_i^{NE} > s_i^{\circ}$. Moreover, if $w_{ij} = 1$ and $\gamma_i = \gamma$ for all $i, j \in N$, then $\|\vec{s}^{NE}\|_1 > \|\vec{s}^{\circ}\|_1$.
\end{corollary}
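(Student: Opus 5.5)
The plan is to compare the first-order condition of an interior $\mathcal{M}^{\text{Shap}}$ equilibrium with the welfare gradient. I will show that at $\vec{s}^{NE}$ every coordinate of $\nabla W$ is strictly negative, and then use concavity of $W$ to get both claims.

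\textbf{Step 1 (Shapley marginal exceeds social marginal).} Under $\mathcal{M}^{\text{Shap}}$ we have $u_i^{\text{Shap}} = \varphi_i^A - \gamma_i s_i$. At an interior NE the first-order condition is $\partial \varphi_i^A/\partial s_i(\vec{s}^{NE}) = \gamma_i$. In \eqref{eq:shapley} the term $A(\vec{s}[X])$ does not depend on $s_i$ when $i\notin X$, so
\begin{equation*}
\frac{\partial \varphi_i^A}{\partial s_i}(\vec{s}) = \frac{1}{n}\sum_{X\subseteq N\setminus\{i\}} \binom{n-1}{|X|}^{-1} \partial_i A(\vec{s}[X\cup\{i\}]),
\end{equation*}
which is a convex combination of the values $\partial_i A(\vec{s}[X\cup\{i\}])$. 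Here $\partial_i A(\vec{s}) = \sum_k \beta w_{ki}\,\xi_k(\vec{s})^{-\beta-1}$. Each term is non-increasing in $\xi_k$, and $\xi_k(\vec{s}[X\cup\{i\}]) \le \xi_k(\vec{s})$, so every summand is at least $\partial_i A(\vec{s})$. The weight on $X=N\setminus\{i\}$ is exactly $\partial_i A(\vec{s})$.

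For strictness, take $X=\{i\}$. By Assumption \ref{assump:weights} there is some $k\ne i$ with $w_{ki}>0$. Since $w_{kk}=1$ and $s_k^{NE}>0$ by interiority, $\xi_k$ strictly drops when we restrict to $X=\{i\}$. Hence
\begin{equation*}
\gamma_i = \partial_i\varphi_i^A(\vec{s}^{NE}) > \partial_i A(\vec{s}^{NE}),
\end{equation*}
that is, $\partial_i W(\vec{s}^{NE})<0$ for every $i$. This step is essentially the incompatibility behind Theorem \ref{thm:mshap_SO}. I expect the strictness bookkeeping to be the main point needing care.

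\textbf{Step 2 (some agent over-contributes).} Suppose instead that $s_i^{\circ}\ge s_i^{NE}$ for all $i$, and set $d=\vec{s}^{\circ}-\vec{s}^{NE}\ge 0$. If $d=0$, then $\vec{s}^{NE}$ is a welfare maximizer, contradicting Theorem \ref{thm:mshap_SO}. Otherwise concavity of $W$ gives
\begin{equation*}
0 \le W(\vec{s}^{\circ})-W(\vec{s}^{NE}) \le \nabla W(\vec{s}^{NE})\cdot d < 0,
\end{equation*}
where the last inequality holds because every gradient coordinate is strictly negative and $d\ge0$, $d\ne0$. This is a contradiction, so some $i$ has $s_i^{NE}>s_i^{\circ}$.

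\textbf{Step 3 (homogeneous case).} With $w_{ij}=1$ and $\gamma_i=\gamma$, every $\xi_k$ equals $1+S$ with $S=\|\vec{s}\|_1$. Welfare then depends only on $S$: $W = g(S) := n(1-(1+S)^{-\beta})-\gamma S$, which is strictly concave on $[0,\sum_i\tau_i]$. Step 1 gives $g'(S^{NE}) = n\beta(1+S^{NE})^{-\beta-1}-\gamma<0$. By strict concavity, $g$ is strictly decreasing on $[S^{NE},\sum_i\tau_i]$, so $g(S) < g(S^{NE})$ for every $S > S^{NE}$. Also $g(S^{NE})<\max g$, since $S^{NE}$ is interior and $g'(S^{NE})\ne0$. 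It follows that every maximizer satisfies $\|\vec{s}^{\circ}\|_1 < \|\vec{s}^{NE}\|_1$.
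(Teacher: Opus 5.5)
Your proof is correct and follows essentially the same route as the paper: the interior first-order condition $\partial_i\varphi_i^A(\vec{s}^{NE})=\gamma_i$ plus the strict inequality $\partial_i\varphi_i^A>\partial_i A$ give $\nabla W(\vec{s}^{NE})<0$ in every coordinate, concavity of $W$ then rules out $\vec{s}^{\circ}\geq\vec{s}^{NE}$, and the homogeneous case reduces to a strictly concave function of $\|\vec{s}\|_1$. Your $d=0$ versus $d\neq 0$ split and your monotonicity argument on $[S^{NE},\sum_i\tau_i]$ are only cosmetic variants of the paper's appeal to Theorem~\ref{thm:mshap_SO} and its three-case analysis; one small slip is that the strictness term is $X=\emptyset$ (so $X\cup\{i\}=\{i\}$, with weight $1/n>0$), not $X=\{i\}$, which is not in the index set $N\setminus\{i\}$.
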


Social optimality asks each agent to contribute until the marginal value of its data falls to (equals) its marginal cost. Under $\mathcal{M}^{\text{Shap}}$, an agent instead contributes until its Shapley value falls to its marginal cost. The Shapley value averages the agent's marginal contribution over all sub-coalitions, including small ones holding little data, where the marginal value of data is higher owing to the diminishing return assumption. The Shapley margin therefore sits above the social margin, and the two conditions cannot hold simultaneously.  In essence, $\mathcal{M}^{\text{Shap}}$ rewards each agent for the value it would create in these hypothetical smaller federations, and this extra reward pushes contributions past the point where they benefit society. \Cref{cor:overcontribution} makes this precise: under $\mathcal{M}^{\text{Shap}}$, at any interior NE, some agent contributes strictly more than at any welfare maximizer, and in the homogeneous case, the total contribution strictly exceeds the latter.
\section{An Analysis of Individual Rationality}

\label{sec:IR}

Next, we compare $\mathcal{M}^\text{Shap}$ and $\mathcal{M}^E$ on individual rationality. We measure the \emph{IR gap} of agent $i$ at a NE $\vec{s}^*$ under mechanism $\mathcal{M}$:
\begin{equation}
    \Delta_i(\mathcal{M}) := u_i(\vec{s}^*) - v^*_i,
    \label{eq:IR_gap}
\end{equation}
where recall $v_i^{*}$ is agent $i$'s maximum utility when training alone. A positive (resp. negative) IR gap means the agent strictly prefers to participate (resp. opt-out of the federation). When the IR gap is negative, the mechanism can in principle provide agent $i$ an upfront payment of $e_i = \max(0, -\Delta_i(\mathcal{M}))$ to compensate for its shortfall, making participation at least as attractive as training alone. The IR gap therefore characterizes the minimum endowment needed to guarantee participation.  Our analysis below shows that while $\mathcal{M}^\text{Shap}$ requires data homogeneity to ensure a non-negative IR gap property, $\mathcal{M}^E$ provides this guarantee more generally.

\begin{assumption}
\label{assump:interior_solo}
We assume that each agent's standalone optimum is interior, i.e., $s_i^{\emph{solo}} \in 
(0, \tau_i)$ for all $i \in N$. 
\end{assumption}

\begin{theorem}
\label{thm:mshap_IR_homo}
In the homogeneous case ($w_{ij} = 1$ and $\gamma_i = \gamma$, 
$\forall i, j \in N$), 

\begin{equation}
    \Delta_i(\mathcal{M}^{\text{Shap}}) \geq 0 
    \quad \forall i \in N.
\end{equation}
\end{theorem}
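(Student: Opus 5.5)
The plan is a unilateral-deviation argument: compare the equilibrium utility with the utility agent $i$ would get by deviating to its standalone optimum while the others keep their equilibrium contributions. Fix any Nash equilibrium $\vec{s}^*$ of $\mathcal{M}^{\text{Shap}}$ and any agent $i$. Let $s_i^{\text{solo}} \in \arg\max_{s_i \in S_i} [a_i(s_i,\mathbf{0}_{-i}) - \gamma s_i]$, so that $v_i^* = a_i(s_i^{\text{solo}},\mathbf{0}_{-i}) - \gamma s_i^{\text{solo}}$. Since $s_i^{\text{solo}} \in S_i$ is a feasible deviation, the Nash condition \eqref{eq:NE} with $u_i^{\text{Shap}} = \varphi_i^A - c_i$ gives
\begin{equation*}
u_i(\vec{s}^*) \;\geq\; \varphi_i^A(s_i^{\text{solo}}, \vec{s}^*_{-i}) - \gamma s_i^{\text{solo}} .
\end{equation*}
It therefore suffices to prove the pointwise bound $\varphi_i^A(x, \vec{s}_{-i}) \geq a_i(x,\mathbf{0}_{-i})$ for every $x \in S_i$ and every $\vec{s}_{-i}$. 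The cost terms then cancel and we get $\Delta_i(\mathcal{M}^{\text{Shap}}) \geq 0$. This approach needs no interiority and no explicit characterization of the equilibrium.

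To prove the bound, I expand \eqref{eq:shapley} by coalition size. In the homogeneous case every $a_j$ equals $g(\sum_k s_k)$ with $g(t) = 1-(t+1)^{-\beta}$, so $A(\vec{s}) = n\, g(\sum_k s_k)$. The marginal contribution of $i$ to a coalition $X$ is then $n[g(S_X + x) - g(S_X)]$, where $S_X = \sum_{j\in X} s_j$. Because $g$ is non-decreasing, every such term is non-negative. Hence $\varphi_i^A$ is at least the contribution of the $X=\emptyset$ term alone. That term carries weight $\frac{1}{n}\binom{n-1}{0}^{-1} = \frac1n$ and equals $\frac1n \cdot n[g(x) - g(0)] = g(x) = a_i(x,\mathbf{0}_{-i})$, using $g(0)=0$.

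The only step needing care is this weight bookkeeping: the $\frac1n$ normalization in \eqref{eq:shapley} must exactly cancel the factor $n$ coming from $A$ summing $n$ identical payoffs. I would also remark that the same argument works more generally. Monotonicity of each $a_j$ plus $w_{ii}=1$ gives, for the empty coalition, $A(x e_i) \geq a_i(x,\mathbf{0}_{-i})$. However, that only yields $\varphi_i^A \geq \frac1n a_i(x,\mathbf{0}_{-i})$ unless all $a_j(x e_i)$ coincide, which is exactly what homogeneity ($w_{ji}=1$ for all $j$) guarantees. This indicates why the statement is restricted to the homogeneous case.
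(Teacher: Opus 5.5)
Your proposal is correct and follows the paper's proof essentially step for step. You deviate unilaterally to $s_i^{\text{solo}}$, cancel the cost terms using $u_i^{\text{Shap}} = \varphi_i^A - \gamma s_i$, and lower-bound $\varphi_i^A(s_i^{\text{solo}}, \vec{s}^*_{-i})$ by the empty-coalition term, which equals $a_i(s_i^{\text{solo}},\mathbf{0}_{-i})$ under homogeneity while every other marginal contribution is non-negative by monotonicity; your closing remark on why $w_{ji}=1$ is needed likewise mirrors the paper's general condition $a_j(s,\mathbf{0}_{-i}) = a_i(s,\mathbf{0}_{-i})$.
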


\begin{theorem}
\label{thm:IR_ext}
Under Assumptions \ref{assump:weights}, \ref{assump:interior}, and
\ref{assump:interior_solo},
\begin{equation}
    \Delta_i(\mathcal{M}^E) \geq 0 \quad \forall i \in N.
\end{equation}
\end{theorem}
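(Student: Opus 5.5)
The plan is to reduce both sides of the IR gap to the same one-variable function of agent $i$'s weighted contribution, and then compare the arguments using first-order conditions. I would proceed in three steps.

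\emph{Step 1: the equilibrium utility under $\mathcal{M}^E$.} Substitute the payment \eqref{eq:ext_payment_explicit} into \eqref{eq:utility}. The cost reimbursement $\gamma_i s_i^*$ cancels $c_i(s_i^*)$. Writing $\xi^* := \xi_i(\vec{s}^*)$, this leaves
\begin{equation*}
u_i(\vec{s}^*) = 1 - (\xi^*)^{-\beta} - \beta (\xi^*)^{-\beta-1}(\xi^*-1) =: g(\xi^*).
\end{equation*}

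\emph{Step 2: the standalone utility in the same form.} By Assumption \ref{assump:interior_solo}, the standalone optimum satisfies the first-order condition $\beta(s_i^{\text{solo}}+1)^{-\beta-1} = \gamma_i$. Set $\xi^0 := s_i^{\text{solo}}+1$, which uses $w_{ii}=1$. Then $\gamma_i s_i^{\text{solo}} = \beta(\xi^0)^{-\beta-1}(\xi^0-1)$, and therefore
\begin{equation*}
v_i^* = 1-(\xi^0)^{-\beta} - \gamma_i s_i^{\text{solo}} = g(\xi^0).
\end{equation*}
This is exactly the same function $g$ as in Step 1. It follows that $\Delta_i(\mathcal{M}^E) = g(\xi^*) - g(\xi^0)$.

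\emph{Step 3: monotonicity of $g$ and the comparison $\xi^* \ge \xi^0$.} A direct differentiation gives
\begin{equation*}
g'(\xi) = \beta\xi^{-\beta-1} + \beta^2 \xi^{-\beta-1} - \beta(\beta+1)\xi^{-\beta-2} = \beta(\beta+1)\xi^{-\beta-2}(\xi-1),
\end{equation*}
which is nonnegative for $\xi \ge 1$. Both arguments satisfy $\xi^*, \xi^0 \ge 1$, so it suffices to show $\xi^* \ge \xi^0$.

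For this comparison I would use that $\vec{s}^*$ is socially optimal, which holds for every NE of $\mathcal{M}^E$ by Theorems~1 and~2 of \cite{sharma2012localpublicgoodprovisioning}. By Assumption \ref{assump:interior}, $\vec{s}^*$ is interior, so the first-order condition of $W$ in the coordinate $s_i$ holds with equality:
\begin{equation*}
\beta\,\xi_i(\vec{s}^*)^{-\beta-1} + \sum_{k\neq i}\beta w_{ki}\,\xi_k(\vec{s}^*)^{-\beta-1} = \gamma_i .
\end{equation*}
By Assumption \ref{assump:weights}, every $w_{ki} \ge 0$, so the sum over $k \neq i$ is nonnegative. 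Hence
\begin{equation*}
\beta(\xi^*)^{-\beta-1} \le \gamma_i = \beta(\xi^0)^{-\beta-1}.
\end{equation*}
Since $x \mapsto x^{-\beta-1}$ is decreasing, this gives $\xi^* \ge \xi^0$. Combined with the monotonicity of $g$, we obtain $\Delta_i(\mathcal{M}^E) \ge 0$.

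The only delicate point is recognizing that the equilibrium utility and the standalone utility are values of the same function $g$. This works because the price-based payment mimics a first-order condition in the same way the standalone optimum does. Once that is seen, the no-harm assumption $w_{ki}\ge 0$ is precisely what drives the result: it makes others' marginal benefit from agent $i$'s data nonnegative, which lowers agent $i$'s own marginal benefit at the optimum and pushes its weighted contribution $\xi^*$ up to at least $\xi^0$.
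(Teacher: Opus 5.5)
Your proposal is correct and follows essentially the same route as the paper. Both arguments write $u_i^E(\vec{s}^*)$ and $v_i^*$ as the same function ($g = 1-f$ in the paper's notation) evaluated at $\xi_i(\vec{s}^*)$ and $s_i^{\text{solo}}+1$, prove monotonicity on $[1,\infty)$, and use the interior KKT condition with $w_{ki}\ge 0$ together with the standalone FOC to get $\xi_i(\vec{s}^*) \ge s_i^{\text{solo}}+1$.
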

 
In essence, 
$\mathcal{M}^E$ and $\mathcal{M}^\text{Shap}$ differ under agent heterogeneity. $\mathcal{M}^E$ maintains a non-negative IR gap for any $w_{ij} \geq 0$, requiring no endowment regardless of data heterogeneity. In contrast, there exist heterogeneous instances where $\Delta_i(\mathcal{M}^{\text{Shap}}) < 0$ for some agent $i$. We demonstrate such instances and characterize the IR gap in Section \ref{sec:sim}. 

Unlike Theorem \ref{thm:IR_ext}, Theorem \ref{thm:mshap_IR_homo} does not require Assumption \ref{assump:interior_solo}: the argument compares each agent's equilibrium payoff to its standalone payoff directly, without relying on a marginal condition that could fail at a boundary solution. Intuitively, Theorem~\ref{thm:IR_ext} holds for $\mathcal{M}^E$ but not $\mathcal{M}^{\text{shap}}$ because under $\mathcal{M}^E$, an agent's utility depends only on the total weighted data they have access to, and the social optimum always gives each agent access to at least as much data as they would gather on their own.  In contrast, $\mathcal{M}^{\text{Shap}}$'s Shapley payments reward agents based on their {\em average contribution across all coalitions}; while this works when agents are data homogeneous, under heterogeneity, this averaging can over-charge agents whose data is narrowly useful relative to the benefit they receive, causing IR to fail.
\section{An Analysis of Reciprocity}

\label{sec:recip}

Finally, we compare $\mathcal{M}^\text{Shap}$ and $\mathcal{M}^E$ on reciprocity. $\mathcal{M}^{\text{Shap}}$ is fully reciprocal by design as mentioned earlier. The question is therefore how much reciprocity $\mathcal{M}^E$ retains while maintaining social optimality. 

\begin{theorem}
\label{thm:reciprocity_ME_homogeneous}
In the homogeneous case ($w_{ij} = 1$, $\gamma_i = \gamma$, and $\tau_i = \tau$ for all $i, j \in N$), under Assumptions \ref{assump:interior} and \ref{assump:interior_solo}, $\mathcal{M}^E$ admits a symmetric Nash equilibrium with $s_i^* = \bar{s} > 0$ for all $i \in N$. At every symmetric Nash equilibrium $\vec{s}^*$,
\begin{equation}
    a_i(\vec{s}^*) + p_i^E(\vec{s}^*) = \varphi_i^A(\vec{s}^*)
    \quad \forall i \in N,
\end{equation}
i.e. $r(\vec{s}^*) = 1$, the maximum attainable under budget balance.
\end{theorem}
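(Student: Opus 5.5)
The plan is to exploit the full symmetry of the homogeneous instance. With $w_{ij}=1$ and $\gamma_i=\gamma$, every agent's payoff depends on $\vec{s}$ only through the total $S=\sum_j s_j$: we have $a_i(\vec{s}) = a(S) := 1-(S+1)^{-\beta}$ and $\xi_i(\vec{s}) = S+1$ for all $i$. Hence $W(\vec{s}) = n\,a(S) - \gamma S$ is a strictly concave function of $S$ alone, though not of $\vec{s}$. The proof then splits into an existence part and a reciprocity part.

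\textbf{Existence.} The first step is to characterize the welfare maximizers. Every maximizer shares the same total $S^\circ$. At an interior optimum $S^\circ$ is pinned down by the first-order condition $n\beta(S^\circ+1)^{-\beta-1}=\gamma$, and $S^\circ>0$ follows from Assumption~\ref{assump:interior_solo}, since $\beta>\gamma$ is already implied by an interior standalone optimum. By Assumption~\ref{assump:interior} there is an interior NE implementing some maximizer, which forces $S^\circ<n\tau$. Therefore the symmetric profile $s_i^*=\bar{s}:=S^\circ/n\in(0,\tau)$ is also an interior welfare maximizer.

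I then need this particular maximizer to be implemented by a NE of the message game. For this I would invoke the converse direction of the Nash implementation result of \cite{sharma2012localpublicgoodprovisioning}: any welfare-maximizing profile, together with its supporting personalized prices $l^*_{ij}$ from \eqref{eq:prices}, is realized by a NE of the message game. The natural construction takes messages in which every agent proposes $\vec{x}^i=\bar{s}\mathbf{1}$ and announces prices consistent with $l^*_{ij}$; one then verifies the equilibrium conditions via concavity of each $u_i$. I expect this step to be the main obstacle, since it depends on the exact form of the outcome function in \cite{sharma2012localpublicgoodprovisioning}. If only the weaker statement ``some NE exists and all NE are SO'' is available, I would instead verify directly that the prices \eqref{eq:prices} evaluated at $\bar{s}\mathbf{1}$ make no unilateral message deviation profitable.

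\textbf{Reciprocity at a symmetric NE.} Fix any symmetric NE with $s_i^*=\bar{s}$. Three computations are needed.

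First, the Shapley values are all equal. Agents $i$ and $j$ are interchangeable in the cooperative game $X\mapsto A(\vec{s}[X]) = n\,a\bigl(\sum_{k\in X}\bar{s}\bigr)$, so the symmetry of \eqref{eq:shapley} gives $\varphi_i^A = \varphi_j^A$. Efficiency, $\sum_i\varphi_i^A = A(\vec{s}^*) = n\,a(n\bar{s})$, then yields $\varphi_i^A(\vec{s}^*) = a_i(\vec{s}^*)$.

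Second, the payments vanish. By \eqref{eq:ext_payment_explicit}, $p_i^E = \gamma\bar{s} - \beta\xi^{-\beta-1}\cdot n\bar{s}$ with $\xi=n\bar{s}+1$ common to all agents. Since the symmetric NE is interior and socially optimal, the first-order condition $n\beta\xi^{-\beta-1}=\gamma$ holds, so $p_i^E=0$. As a consistency check, this also follows from budget balance, because the payments are identical across agents and sum to zero.

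Combining the two, $a_i(\vec{s}^*)+p_i^E(\vec{s}^*) = \varphi_i^A(\vec{s}^*)$ for every $i$, so $r(\vec{s}^*)=1$.

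Finally, $1$ is the largest attainable value of $r$ under budget balance. Summing over agents gives
\begin{equation*}
\sum_i\bigl(a_i+p_i\bigr) = A = \sum_i \varphi_i^A .
\end{equation*}
Hence the ratios $(a_i+p_i)/\varphi_i^A$ cannot all exceed $1$, and their minimum is at most $1$.
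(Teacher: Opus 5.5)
Your proposal is correct and follows the paper's proof essentially step for step: you reduce $W$ to a strictly concave function of $\|\vec{s}\|_1$, use $\gamma<\beta$ and Assumption~\ref{assump:interior} to place the optimal total strictly inside $(0,n\tau)$, and appeal to Theorem~2 of \cite{sharma2012localpublicgoodprovisioning}, which states that every social optimum is implemented by some NE. That is exactly the converse you flagged as the main obstacle, and the paper simply cites it, so no direct verification of message deviations is needed; the rest (Shapley symmetry plus efficiency, and the interior condition $n\beta\xi^{-\beta-1}=\gamma$ making $p_i^E$ vanish) matches, and your summation argument for $r\le 1$ mirrors the paper's budget-balance remark, implicitly using $\varphi_i^A(\vec{s}^*)>0$, which holds here since $\bar{s}>0$ and all weights equal $1$.
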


At a symmetric equilibrium of a homogeneous instance, every agent contributes the same amount, benefits identically, and is owed exactly what it owes others, so the cost reimbursement and consumption tax in \eqref{eq:ext_payment_explicit} cancel and no payments are exchanged. With no payments, each agent's utility is simply their payoff, and identical agents making identical contributions means an agent's payoff equals its Shapley share; under $\mathcal{M}^\text{Shap}$, $p_i^{\text{Shap}}$ also vanishes at this profile, so the two payment rules converge. Furthermore, a BB mechanism can only redistribute, so reciprocating one agent above its Shapley share necessarily pushes another below; thus no equilibrium of a BB mechanism satisfies $r(\vec{s}^*) > 1$ \cite{pmlr-v267-murhekar25a}. Therefore, symmetric equilibria of $\mathcal{M}^E$ attain the maximum $r(\vec{s}^*)$ value possible for a BB mechanism.

Under heterogeneous weights or costs, payments no longer vanish and Shapley shares no longer coincide with individual payoffs, so $r(\vec{s}^*)$ can fall below $1$. We quantify this in Section \ref{sec:sim}. The result mirrors Section \ref{sec:SO} with the roles reversed. $\mathcal{M}^{\text{Shap}}$ is fully reciprocal but fails social optimality at every interior equilibrium, while $\mathcal{M}^E$ achieves social optimality at every equilibrium but attains $r(\vec{s}^*) = 1$ only at symmetric equilibria of homogeneous instances.

\section{Numerical Experiments and Simulation}
\label{sec:sim}
We perform two types of numerical evaluations: a model-based study generated directly from the analytical payoff model \eqref{eq:payoff},  and a simulation-based study in which agents participate in a full FL training loop on the MNIST dataset to verify that our findings hold under realistic training dynamics. In both studies, we assume all agents remain in their respective mechanisms regardless of their IR gap: relaxing this assumption and developing a distributed algorithm that dynamically converges to the $\mathcal{M}^E$ equilibrium remain directions for future work.

\subsection{Model-based Numerical Experiments}
\label{sec:theoretical}

We obtain the $\mathcal{M}^E$ solution via global optimization of $W(\vec{s})$ over the analytical payoff model \eqref{eq:payoff}, since its NE coincides with the social optimum. For $\mathcal{M}^{Shap}$, the equilibrium $s^*$ is obtained through best-response dynamics following the convergence result in \cite{pmlr-v267-murhekar25a}. Throughout, we use $n=3$ agents. 

\subsubsection{Social Optimality}
\label{subsec:homogeneous_theoretical}
We evaluate the social welfare and equilibrium data contributions in a homogeneous setting ($w_{ij}=1, \forall i,j$), varying the marginal cost parameter  $\gamma_i=\gamma \in \{0.02,0.015,0.01,0.007,0.005,0.003,0.001\}$ and the diminishing returns parameter $\beta \in \{0.005, 0.05, 0.5\}$. Recall that $\beta$ controls how steeply the payoff grows with data: larger $\beta$ means faster payoff growth, which drives stronger contribution incentives for both mechanisms.

As illustrated in Fig.~\ref{fig:cost_comparison}, the equilibrium contribution level $s^*$ under both mechanisms decreases monotonically as the marginal cost $\gamma$ increases. Across all values of $\gamma$ and $\beta$, $\mathcal{M}^{Shap}$ induces higher equilibrium contributions than $\mathcal{M}^E$, consistent with Corollary \ref{cor:overcontribution}. Importantly, the gap between the two mechanisms widens as $\beta$ increases: when payoffs grow faster with data, $\mathcal{M}^{\text{Shap}}$'s Shapley-based incentives push agents further past the socially optimal contribution level.

\begin{figure}[h]
    \centering
    \begin{subfigure}[b]{0.32\linewidth}
        \centering
        \includegraphics[width=\linewidth]{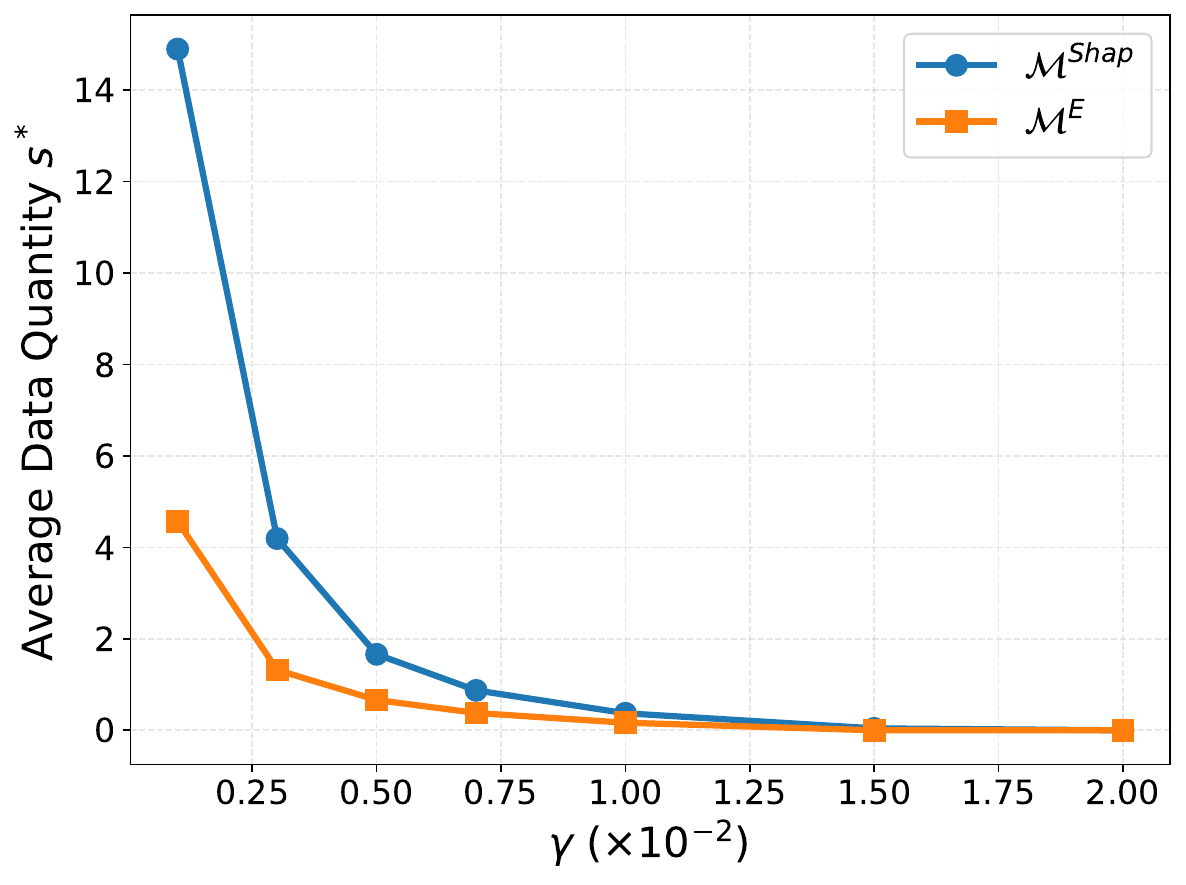}
        \caption{$\beta=0.005$}
        \label{fig:data_quantity_beta_0.005}
    \end{subfigure}
    \hfill
    \begin{subfigure}[b]{0.32\linewidth}
        \centering
        \includegraphics[width=\linewidth]{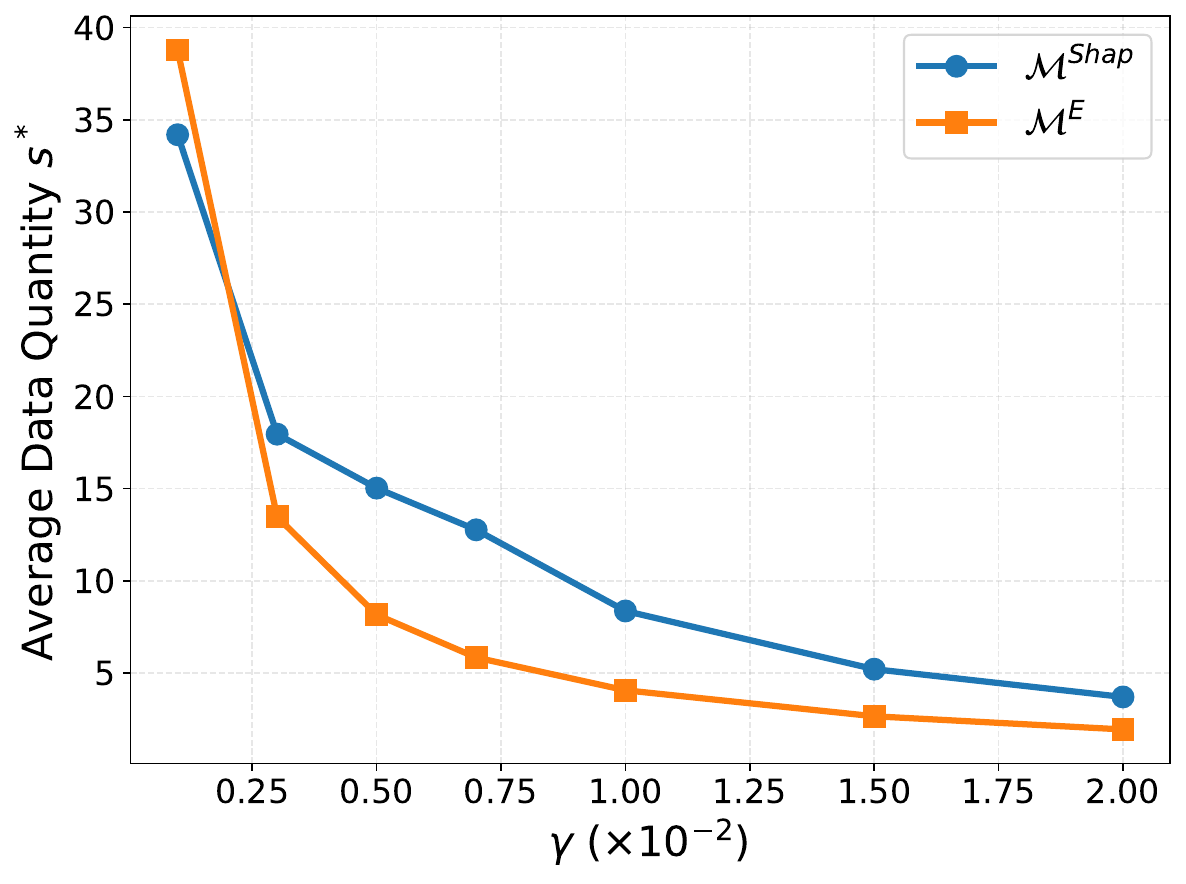}
        \caption{$\beta=0.05$}
        \label{fig:data_quantity_beta_0.05}
    \end{subfigure}
       \begin{subfigure}[b]{0.32\linewidth}
        \centering
        \includegraphics[width=\linewidth]{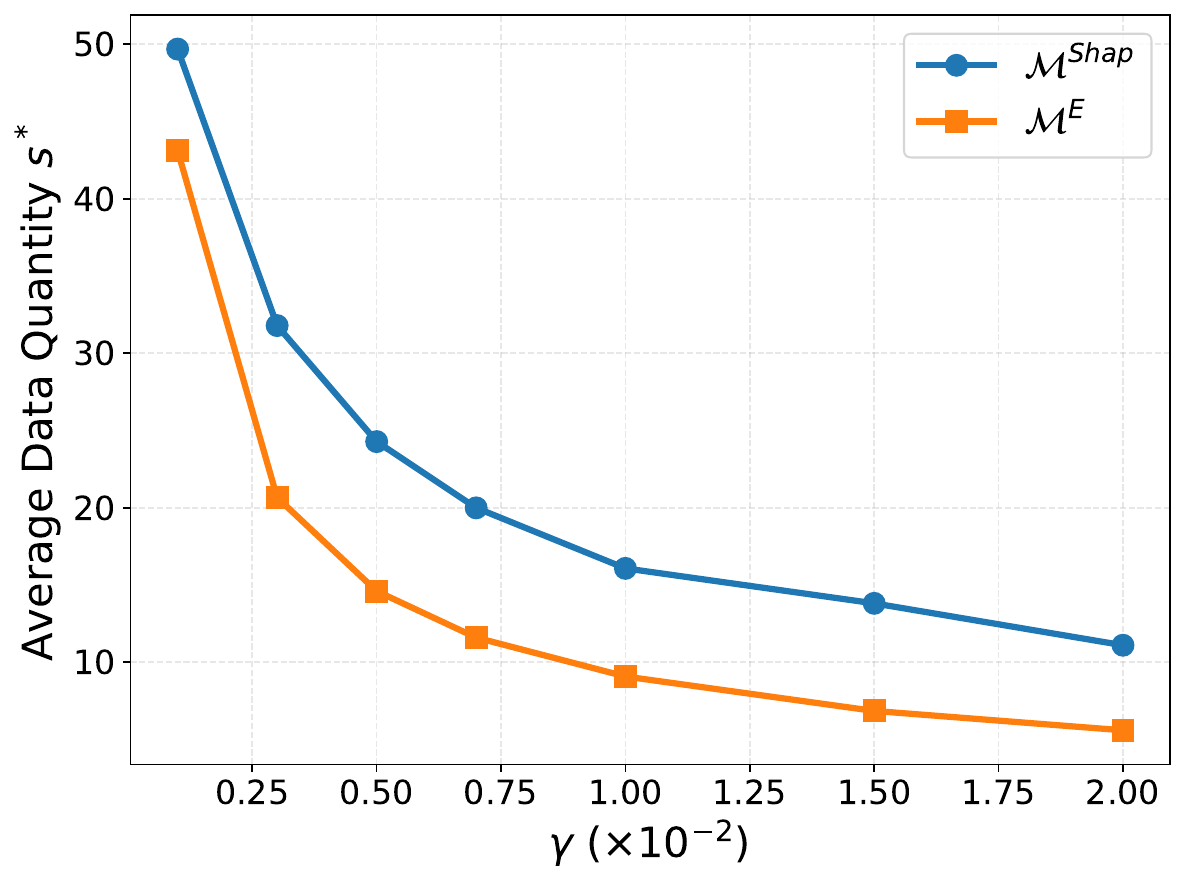}
        \caption{$\beta=0.5$}
        \label{fig:data_quantity_beta_0.5}
    \end{subfigure}
    \caption{Equilibrium data quantity $s^*$ under different $\beta, \gamma$ and $n=3$.} 
    
    \label{fig:cost_comparison}
\end{figure}

However, the additional contributions under $\mathcal{M}^{Shap}$ do not translate into higher social welfare, as seen in Fig.~\ref{fig:so}, which shows that despite contributing more data, $\mathcal{M}^{\text{Shap}}$ achieves strictly lower social welfare than $\mathcal{M}^E$ across all settings. This inefficiency arises because $\mathcal{M}^{Shap}$ tends to overestimate the value of data contributions, leading clients to contribute beyond the socially optimal level, as discussed earlier. This welfare gap is most pronounced at larger $\beta$ values, where the over-contribution under $\mathcal{M}^{\text{Shap}}$ is most severe. In contrast, $\mathcal{M}^E$ directly aligns incentives with social welfare and therefore avoids such over-contribution. At very small $\beta$ (Fig.~\ref{fig:so_beta_0.005}), where payoff growth is slow and contributions are low overall, this welfare gap shrinks, suggesting the two mechanisms behave more similarly when data has limited 
marginal value.

\begin{figure}[h]
    \centering
    \begin{subfigure}[b]{0.32\linewidth}
        \centering
        \includegraphics[width=\linewidth]{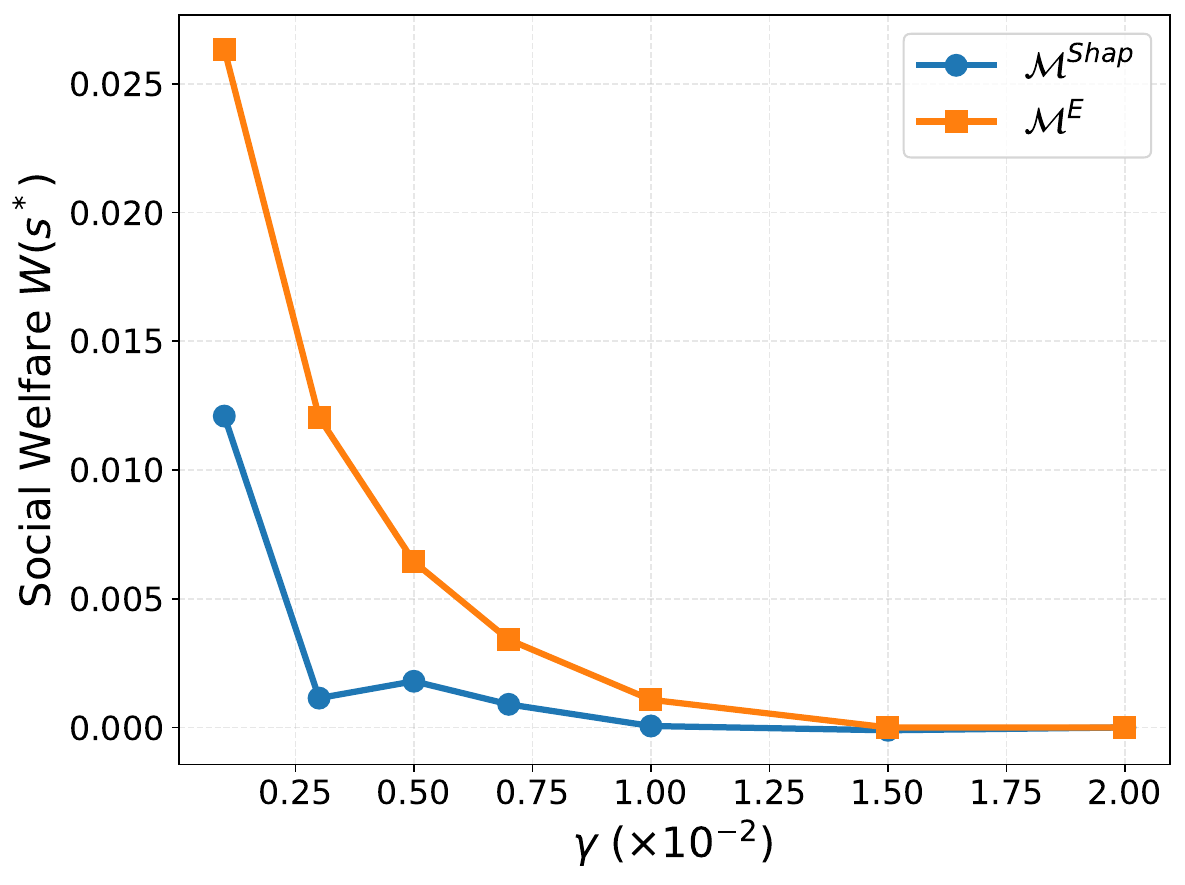}
        \caption{$\beta=0.005$}
        \label{fig:so_beta_0.005}
    \end{subfigure}
    \hfill
    \begin{subfigure}[b]{0.32\linewidth}
        \centering
        \includegraphics[width=\linewidth]{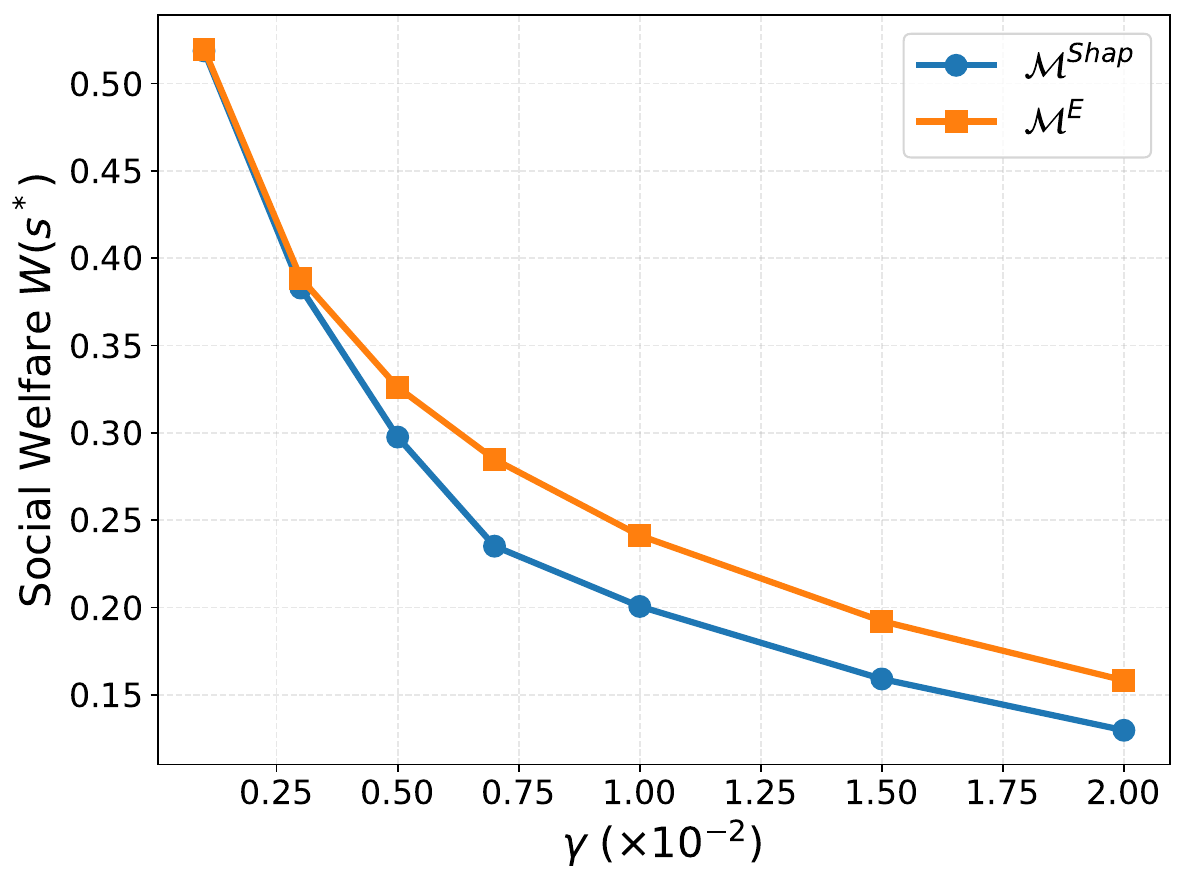}
        \caption{$\beta=0.05$}
        \label{fig:so_beta_0.05}
    \end{subfigure}
    \begin{subfigure}[b]{0.32\linewidth}
        \centering
        \includegraphics[width=\linewidth]{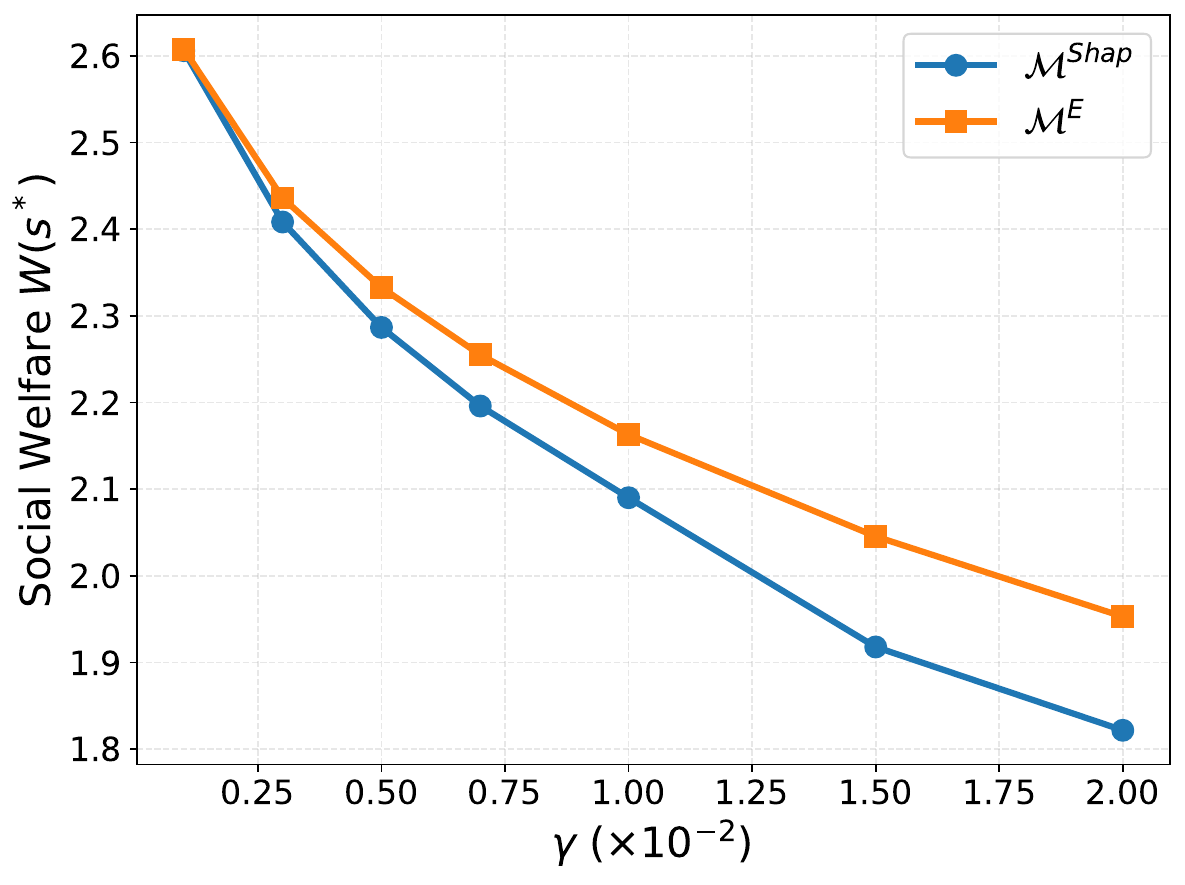}
        \caption{$\beta=0.5$}
        \label{fig:so_beta_0.5}
    \end{subfigure}
    \caption{Social welfare optimality under different $\beta, \gamma$ and $n=3$.}  
    \label{fig:so}
\end{figure}

\subsubsection{Individual Rationality in Heterogeneous Settings}
\label{subsec:heterogeneous_ir}

We evaluate IR under heterogeneity using the weight matrix $W$ for agents indexed $\{0, 1, 2\}$:
\begin{equation}
W =
\begin{bmatrix}
1.0 & w_{01} & 2.0 \\
w_{10} & 1.0 & 0.5 \\
1.0 & 3.0 & 1.0
\end{bmatrix}.
\end{equation}
We fix $\gamma_i=\gamma=0.001$ for all $i\in\mathcal{N}$ and $\beta=0.02$, while varying $w_{01}$ and $w_{10}$ from $-3$ to $3$. We allow $W_{ij}$ to take both positive and negative values: a positive (resp. negative) value indicates that agent $i$ benefits (resp. harms) agent $j$, with larger magnitudes corresponding to stronger effects. The behavior of both mechanisms under negative weights is further analyzed in Section~\ref{sec:discussion}. 

Fig.~\ref{fig:heterogeneous_ir_heatmap_all} shows that $\mathcal{M}^{Shap}$ does not always satisfy IR in heterogeneous settings, as evidenced by  several counterexamples that demonstrate  Theorem \ref{thm:IR_ext} cannot be extended to $\mathcal{M}^{Shap}$ under data heterogeneity. Negative IR gaps arise \emph{even in the positive quadrant}, where all interactions are beneficial. This occurs when agents contribute asymmetrically: an agent may provide substantial value to others but receive insufficient compensation relative to its participation cost. Since $\mathcal{M}^{Shap}$ allocates rewards based on marginal contributions and is independent of individual costs, the resulting payment may not fully offset an agent's cost, leading to a negative IR gap. For example, when $w_{01}$ is low and $w_{10}$ is high\footnote{This can happen if Agent 1 is more privacy sensitive and injects more noise into its data.}, Agent~0 contributes substantial value to Agent~1 but receives insufficient compensation, resulting in $u_0^{Shap}<v_0^*$. As seen in the top row of Fig.~\ref{fig:heterogeneous_ir_heatmap_all}, these positive-quadrant violations affect Clients~0 and~1. In contrast, $\mathcal{M}^{E}$ maintains a strictly positive IR gap throughout the entire parameter space, consistent with Theorem~\ref{thm:IR_ext}.

\begin{figure}[t]
    \centering  
\includegraphics[width=0.95\linewidth]{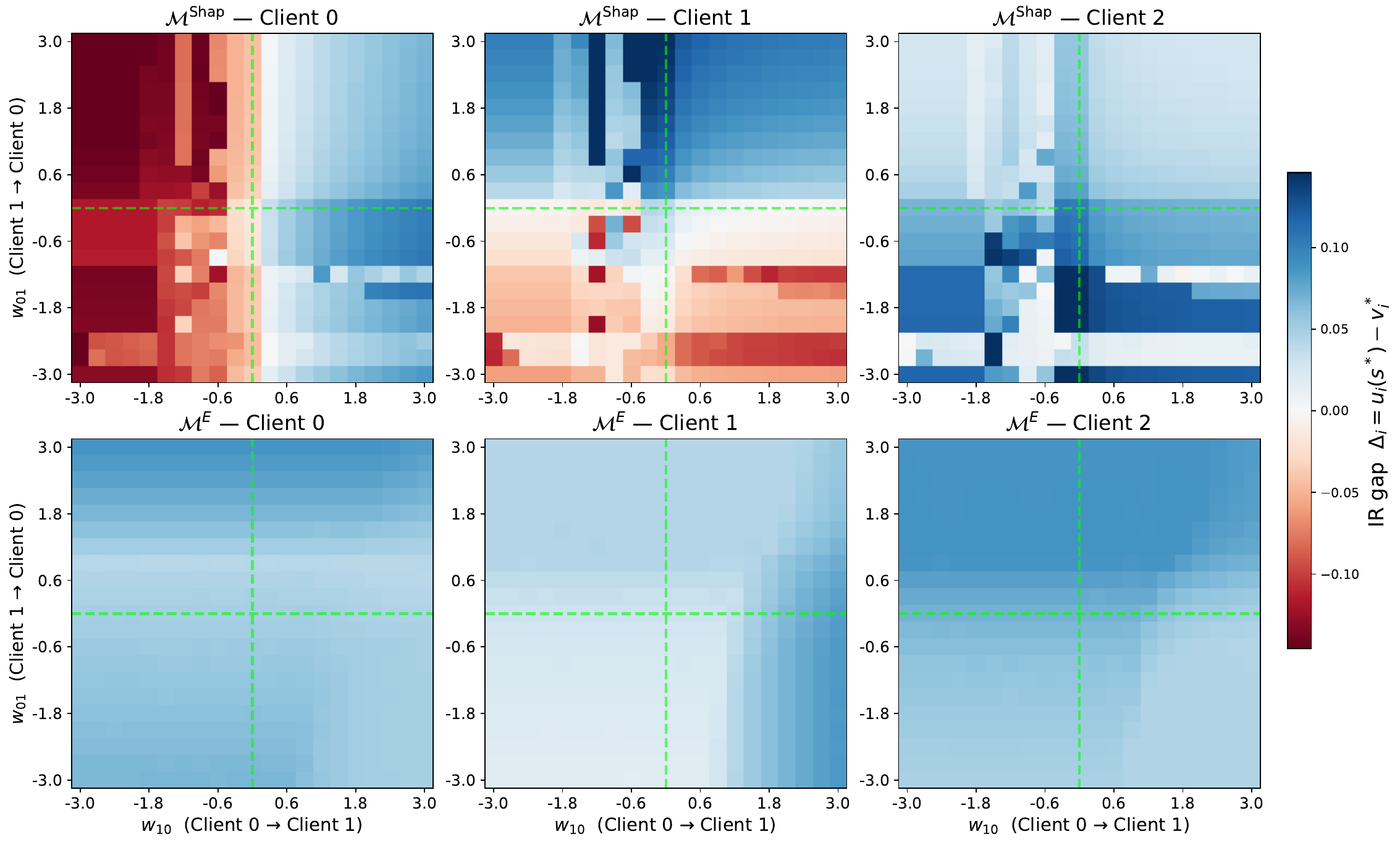}
    \caption{Heatmap of clients' IR gaps ($\Delta_i$) under varying cross-client weights $w_{01}$ and $w_{10}$. The red region denotes areas where IR is violated ($\Delta_i < 0$), while the blue region denotes areas where IR holds ($\Delta_i \ge 0$).}
    
    \label{fig:heterogeneous_ir_heatmap_all}
\end{figure}

\subsubsection{Reciprocity}
\label{subsec:fairness_analysis}

To evaluate the reciprocity of $\mathcal{M}^{E}$, we compute the Shapley values induced by its equilibrium allocation $\mathbf{s}^*=[s_1^*,s_2^*,\ldots,s_n^*]$. Specifically, substituting $\mathbf{s}^*$ into the utility model yields the corresponding client values $a_i$, from which the Shapley value $\varphi_i^{A}$ can be computed. 
Table~\ref{tab:mechanism_comparison_W} compares the reciprocity ratios of both mechanisms under different weight matrices with $\gamma=0.001$ and $\beta=0.02$. 
The columns ``Agent $i$'' report $
\min_{\mathbf{s}\in \mathrm{NE}(\mathcal{M})}
\frac{a_i(\mathbf{s}) + p_i(\mathbf{s})}
{\varphi_i^A(\mathbf{s})}$.
A value of 1 indicates perfect reciprocity. Values greater (resp. less) than 1 indicate that the benefit received by the agent exceeds (resp. falls below) its allocated share.

In the homogeneous setting, $\mathcal{M}^{Shap}$ and $\mathcal{M}^{E}$ achieve perfect reciprocity, consistent with Theorem \ref{thm:reciprocity_ME_homogeneous}. In heterogeneous environments, $\mathcal{M}^{E}$ exhibits noticeable deviations. This behavior stems from its use of personalized marginal pricing to internalize externalities and maximize social welfare. In contrast, $\mathcal{M}^{Shap}$  maintains perfect reciprocity by 
allocating rewards according to marginal contributions. These results highlight the fairness--efficiency trade-off: $\mathcal{M}^{Shap}$ prioritizes fairness, while $\mathcal{M}^{E}$ sacrifices fairness to achieve higher social efficiency.

\begin{table*}[h]
\centering
\caption{Reciprocity under Different Weight Matrices.}

\label{tab:mechanism_comparison_W}
\footnotesize
\begin{tabular}{llcccc}
\toprule
\textbf{Weight Matrix} &
\textbf{Mechanism} &
\textbf{Reciprocity} &
\textbf{Agent 0} &
\textbf{Agent 1} &
\textbf{Agent 2} \\
\midrule

\multirow{2}{*}{
$
W_1=
\begin{bmatrix}
1 & 1 & 1\\
1& 1 &1\\
1 & 1& 1
\end{bmatrix}
$
}
\\
& $\mathcal{M}^{\mathrm{Shap}}$
& 1.000 & 1.000 & 1.000 & 1.000 \\

& $\mathcal{M}^{E}$
& 1.000 & 1.000 & 1.000 & 1.000\\

\midrule

\multirow{2}{*}{
$
W_2=
\begin{bmatrix}
1 & 1.2 & 0.4\\
0.5& 1 & 3\\
1.4 & 0.8& 1
\end{bmatrix}
$
}
\\
& $\mathcal{M}^{\mathrm{Shap}}$
& 1.000 & 1.000 & 1.000 & 1.000 \\

& $\mathcal{M}^{E}$
&0.801 & 0.995 & 1.390 & 0.801  \\

\midrule

\multirow{2}{*}{
$
W_3=
\begin{bmatrix}
1 & 0.1 & 2\\
-0.2 & 1 & 0.5\\
1 & 3 & 1
\end{bmatrix}
$
}
\\
& $\mathcal{M}^{\mathrm{Shap}}$
& 1.000 & 1.000 & 1.000 & 1.000 \\

& $\mathcal{M}^{E}$
& 0.678 & 0.915 & 3.375 & 0.678  \\

\bottomrule
\end{tabular}
\end{table*}

\subsection{Simulations (MNIST)}
We now validate our findings in a realistic FL setting where $a_i$ is replaced by actual test accuracy from training a MLP on MNIST, rather than the analytical payoff model \eqref{eq:payoff}. We consider three agents on the MNIST dataset, each holding 400 images evenly distributed across 10 classes. Simulation settings are summarized in Appendix \ref{app:sim}. We examine two cost scenarios corresponding to low ($\gamma = 0.001$) and moderate ($\gamma = 0.01$) per-sample contribution cost. 

\begin{figure}[h]
    \centering
    \begin{subfigure}[t]{0.45\linewidth}
        \centering
        \includegraphics[width=\linewidth]{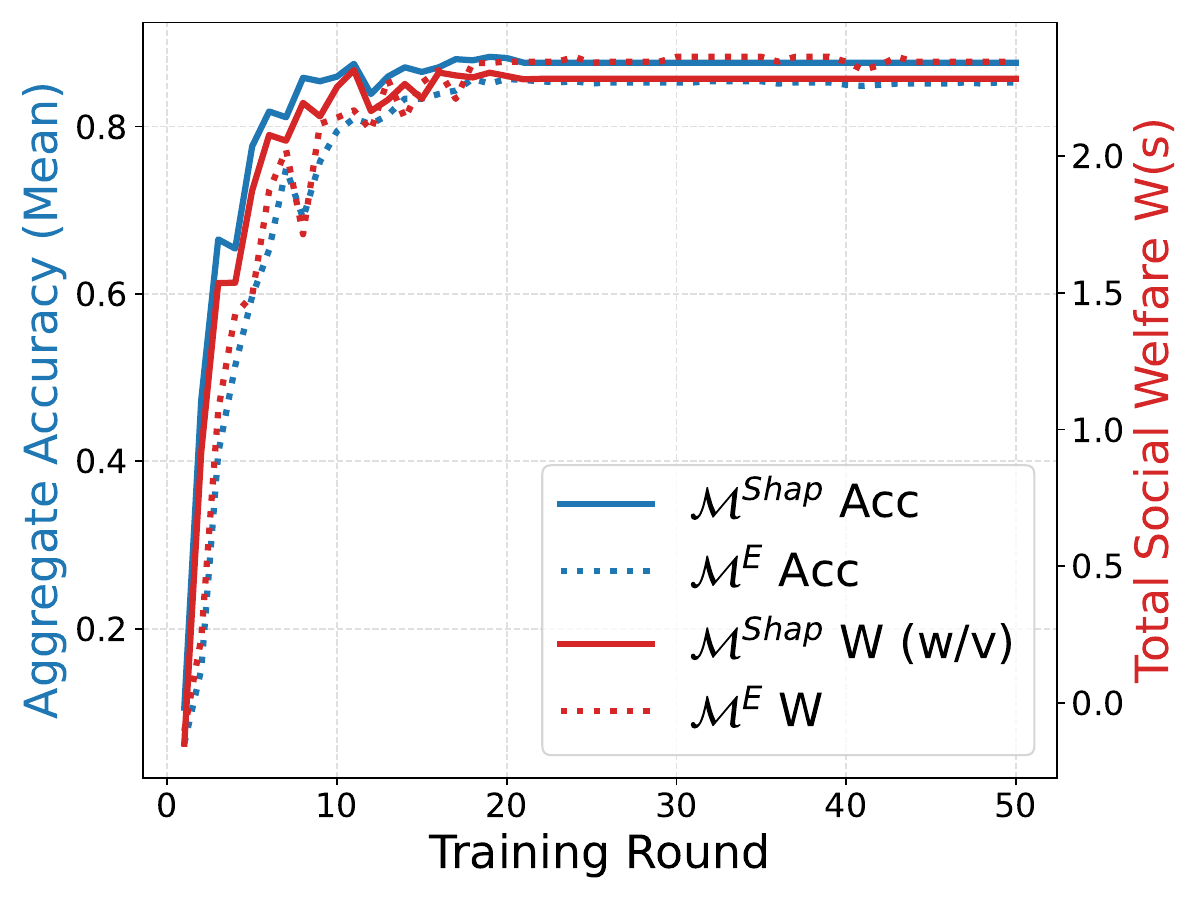}
        \caption{$\gamma=0.001$}
        \label{fig:low_gamma}
    \end{subfigure}
    \hfill
    \begin{subfigure}[t]{0.45\linewidth}
        \centering
        \includegraphics[width=\linewidth]{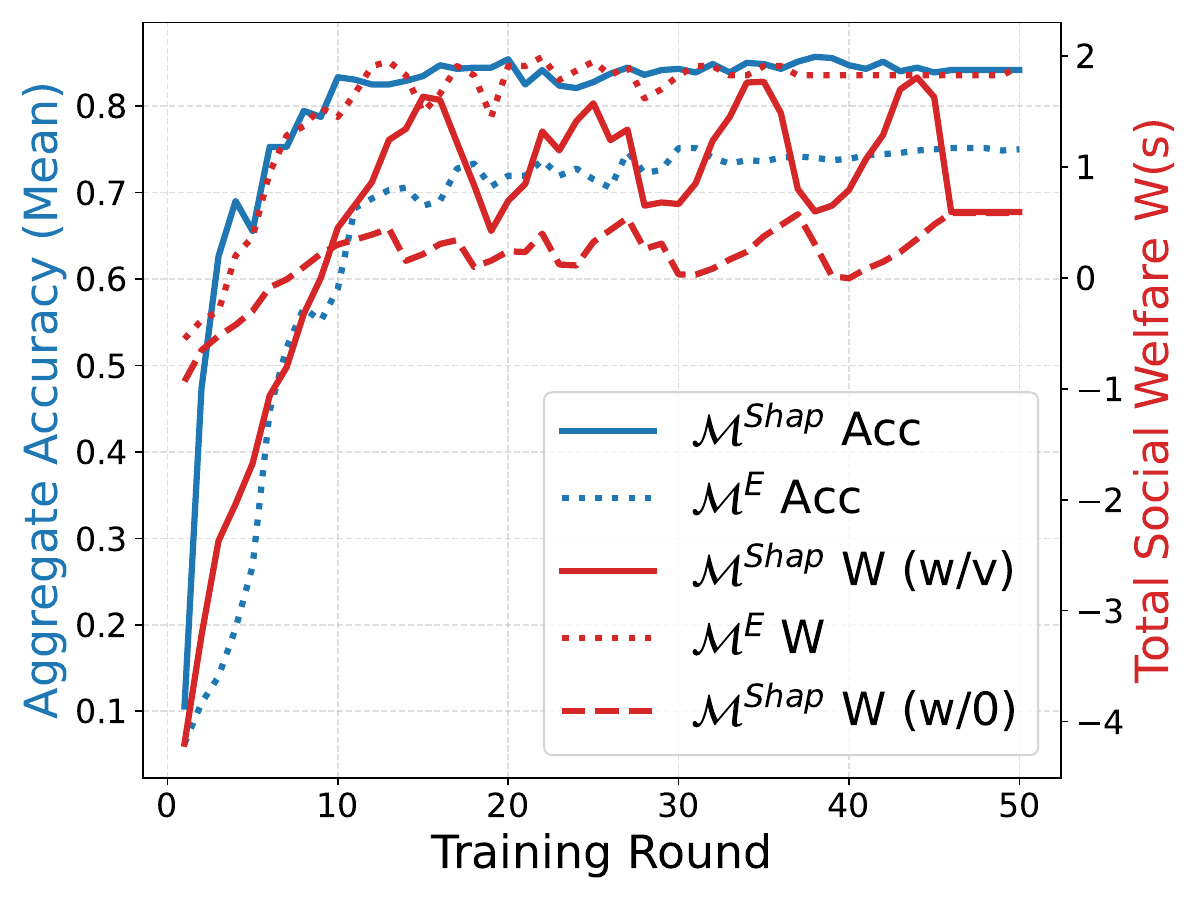}
       \caption{$\gamma=0.01$}
        \label{fig:high_gamma}
    \end{subfigure}
    \caption{Aggregate accuracy and social welfare for $n=3$ under different cost settings, averaged over three random seeds; 
    the legend ``w/v'' denotes welfare calculated by including all agents in the mechanism, whereas ``w/0'' is calculated using standalone utility of those who opt out (under $M^{\text{Shap}}$ in (b)).}
   
    \label{fig:agnews_no_opt}
\end{figure}

In the low-cost scenario (Fig.~\ref{fig:low_gamma}), all agents satisfy IR and participate in the federation under both mechanisms.
Accuracy and welfare
converge to similar values under $\mathcal{M}^{\text{Shap}}$ and $\mathcal{M}^E$, consistent with the theoretical result that both mechanisms guarantee IR under homogeneity (Theorem~\ref{thm:mshap_IR_homo},
Theorem~\ref{thm:IR_ext}). In the moderate-cost scenario (Fig.~\ref{fig:high_gamma}), the marginal accuracy gain no longer justifies the contribution cost for some agents, causing opt-out under $\mathcal{M}^{\text{Shap}}$. While $\mathcal{M}^{\text{Shap}}$ achieves slightly higher aggregate accuracy due to data overcontribution at equilibrium,
$\mathcal{M}^E$ achieves strictly higher social welfare. When opted-out clients are evaluated on their standalone accuracy, $\mathcal{M}^{\text{Shap}}$'s welfare (w/0) drops. The visible fluctuations in $\mathcal{M}^{\text{Shap}}$'s welfare stem from BR dynamics continuously
adjusting contributions. Together, these results confirm that $\mathcal{M}^E$ allocates contributions more efficiently, achieving higher social welfare at lower total data cost, consistent with Theorem~\ref{thm:mshap_SO} and Corollary~\ref{cor:overcontribution}.

\section{Individual Rationality Under Negative Externalities}
\label{sec:discussion}

Throughout our analysis, we have assumed that $w_{ij} \geq 0$ for all $i,j \in N$, meaning no agent's data harms another's model. In practice, however, FL deployments frequently involve agents with conflicting data distributions, incompatible labeling schemes, or misaligned task objectives, where contributing data from one agent can actively degrade model performance for another. For example, in a cross-silo healthcare federation, data collected under different clinical protocols may introduce bias that hurts participating hospitals with different patient populations. Privacy considerations introduce a subtler version of the same phenomenon, where an agent protecting sensitive information may add noise to their data contribution, which, while not adversarial in intent, can reduce the quality of the jointly trained model for other participants. These scenarios motivate relaxing Assumption \ref{assump:weights} to allow $w_{ij}<0$, and raise a natural question: when an agent's data harms others, do the two mechanisms correctly recognize that its participation should not be incentivized?
To answer this question, we extend the IR analysis of Section \ref{sec:IR} to settings where $w_{ij}$ may be negative. Define the \textit{net outgoing externality} of agent $i$ at profile $\vec{s}^*$ as:

\begin{equation}
    T_i(\vec{s}^*) := \sum_{k \neq i} w_{ki} \, \xi_k(\vec{s}^*)^{-\beta - 1},
    \label{eq:Ti}
\end{equation}
where recall $\xi_k(\vec{s}^*)^{-\beta-1}$ is agent $k$'s marginal payoff sensitivity at $\vec{s}^*$. Thus, $T_i(\vec{s}^*)$ measures the total marginal effect of agent $i$'s contribution on all other agents' payoffs at equilibrium, with $T_i(\vec{s}^*) > 0$ (resp. $<0$) indicating a net positive (resp. negative) externality. 

We note that relaxing Assumption~\ref{assump:weights} to allow $w_{ij} < 0$ does not alter the payment rule of $\mathcal{M}^E$, derived solely from the KKT conditions of the welfare maximization problem, which remain valid for any $w_{ij} \in \mathbb{R}$ provided the domain condition $\xi_k(\vec{s}) > 0$ holds for all $\vec{s} \in S$ and $k \in N$. Under non-negative weights this condition is automatic, since $\xi_k(\vec{s})$ is a sum of non-negative terms plus one. Under negative weights it is no longer guaranteed, since sufficiently large contributions from agents who harm $k$ can drive $\xi_k(\vec{s})$ below zero, and we treat it as an explicit restriction on the action space. If this condition fails, the payoff function $a_k$ and consequently the welfare maximization problem are no longer well-defined over all of $S$. When this domain condition is met, what changes under negative weights is the IR guarantee.  

\begin{proposition}
\label{prop:Ti}
Let $w_{ij} \in \mathbb{R}$ for all $i, j \in N$, with $w_{ii} = 1$. Suppose $\xi_k(\vec{s}) > 0$ for all $\vec{s} \in S$ and all $k \in N$, and $\xi_i(\vec{s}^*) \geq 1$ for agent $i$. Under 
Assumptions~\ref{assump:interior} and~\ref{assump:interior_solo}, at any NE $\vec{s}^*$ of $\mathcal{M}^E$ implementing an interior social optimum:
\begin{equation}
    T_i(\vec{s}^*) \geq 0 \iff \Delta_i(\mathcal{M}^E) \geq 0.
\end{equation}
In particular, $T_i(\vec{s}^*) < 0$ implies $\Delta_i(\mathcal{M}^E) < 0$: 
an agent whose contribution harms the federation on net will fail to satisfy individual rationality under $\mathcal{M}^E$.
\end{proposition}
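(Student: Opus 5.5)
The plan is to reduce both sides of the equivalence to a comparison between two scalars: agent $i$'s equilibrium weighted contribution $\xi_i(\vec{s}^*)$ and its standalone level $x_i^{\text{solo}} := s_i^{\text{solo}} + 1$. The reason this should work is that, under $\mathcal{M}^E$, agent $i$'s utility at equilibrium depends on $\vec{s}^*$ only through $\xi_i(\vec{s}^*)$.

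First, I substitute the payment rule \eqref{eq:ext_payment_explicit} into \eqref{eq:utility}. The cost term $\gamma_i s_i^*$ cancels against the reimbursement, which gives
\begin{equation*}
u_i(\vec{s}^*) = f\bigl(\xi_i(\vec{s}^*)\bigr), \qquad f(x) := 1 - x^{-\beta} - \beta x^{-\beta-1}(x-1).
\end{equation*}
This expression does not use the sign of the weights, since the payment rule comes only from the KKT conditions. That is legitimate because the domain condition $\xi_k > 0$ is assumed.

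Next, I treat the standalone problem \eqref{eq:standalone}. Here $a_i(s_i,\mathbf{0}_{-i}) = 1-(s_i+1)^{-\beta}$ because $w_{ii}=1$. By Assumption~\ref{assump:interior_solo}, the first-order condition gives $\beta (x_i^{\text{solo}})^{-\beta-1} = \gamma_i$. I then use this relation to rewrite the cost as $\gamma_i s_i^{\text{solo}} = \beta (x_i^{\text{solo}})^{-\beta-1}(x_i^{\text{solo}}-1)$. With this substitution, $v_i^* = f(x_i^{\text{solo}})$, so
\begin{equation*}
\Delta_i(\mathcal{M}^E) = f\bigl(\xi_i(\vec{s}^*)\bigr) - f\bigl(x_i^{\text{solo}}\bigr).
\end{equation*}

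The next step is monotonicity of $f$. Differentiating, the first two contributions cancel and leave
\begin{equation*}
f'(x) = \beta(\beta+1)x^{-\beta-2}(x-1).
\end{equation*}
This is strictly positive for $x>1$, so $f$ is strictly increasing on $[1,\infty)$. Both arguments of $f$ lie in this interval. For $\xi_i(\vec{s}^*)$ this is the hypothesis $\xi_i(\vec{s}^*)\ge 1$. For $x_i^{\text{solo}}$ it holds because $s_i^{\text{solo}}>0$. Hence $\Delta_i(\mathcal{M}^E)\ge 0$ if and only if $\xi_i(\vec{s}^*) \ge x_i^{\text{solo}}$. I expect this to be the delicate point: $f$ is not monotone for $x<1$, which is exactly why the extra hypothesis $\xi_i(\vec{s}^*)\ge1$ is needed once negative weights are allowed.

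Finally, I link $\xi_i(\vec{s}^*)$ to $T_i$ through the social optimum. Because $\vec{s}^*$ is an interior maximizer of $W$ (Assumption~\ref{assump:interior}), the stationarity condition in $s_i$ reads
\begin{equation*}
\beta\, \xi_i(\vec{s}^*)^{-\beta-1} + \beta\, T_i(\vec{s}^*) - \gamma_i = 0.
\end{equation*}
Combining this with $\gamma_i = \beta (x_i^{\text{solo}})^{-\beta-1}$ gives
\begin{equation*}
\xi_i(\vec{s}^*)^{-\beta-1} = (x_i^{\text{solo}})^{-\beta-1} - T_i(\vec{s}^*).
\end{equation*}
Since $x\mapsto x^{-\beta-1}$ is strictly decreasing on $(0,\infty)$, we get $T_i(\vec{s}^*)\ge 0$ if and only if $\xi_i(\vec{s}^*)\ge x_i^{\text{solo}}$, and hence if and only if $\Delta_i(\mathcal{M}^E)\ge 0$.

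The contrapositive statement follows from the same chain with strict inequalities. If $T_i<0$, then $\xi_i(\vec{s}^*) < x_i^{\text{solo}}$, and strict monotonicity of $f$ on $[1,\infty)$ gives $\Delta_i(\mathcal{M}^E)<0$.
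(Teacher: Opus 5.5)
Your proposal is correct and follows essentially the same route as the paper. You write $\Delta_i(\mathcal{M}^E)$ as the difference of a single function evaluated at $\xi_i(\vec{s}^*)$ and at $s_i^{\text{solo}}+1$, use its strict monotonicity on $[1,\infty)$, and link $\xi_i(\vec{s}^*)$ to $T_i(\vec{s}^*)$ through the interior KKT stationarity condition combined with the standalone first-order condition; the only difference is cosmetic, since your $f$ is $1$ minus the paper's $f$ and is therefore increasing rather than decreasing.
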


Under $\mathcal{M}^E$, when $T_i(\vec{s}^*) < 0$, the mechanism charges agent $i$ for this harm via the tax term in \eqref{eq:ext_payment_explicit}, reducing their utility below what they could achieve by training alone. We complement Proposition \ref{prop:Ti} with numerical experiments that also examine $\mathcal{M}^\text{Shap}$ under negative weights. We consider a 4-agent instance with $\beta = 0.5$, $\vec{\gamma} = (0.1, 0.3, 0.3, 0.3)$, $\tau = 3$ for all agents $i \in N$, where agents $1,2,3$ fully value each other's data ($w_{jk} = 1$ for $j,k \in \{1,2,3\}$), and agent 0 has outgoing weight $w_{k0}$ and incoming weight $w_{0k}$ to and from each of agents $1,2,3$. We sweep $w_{k0} \in [-1, 0.3]$ for three values of $w_{0k} \in \{-0.1, 0, 0.1\}$, corresponding to agent 0 being passively harmed by, neutral to, and benefiting from the other agents. 

\begin{figure}[htbp]
    \centering
    \begin{subfigure}{0.32\textwidth}
        \includegraphics[width=\textwidth]{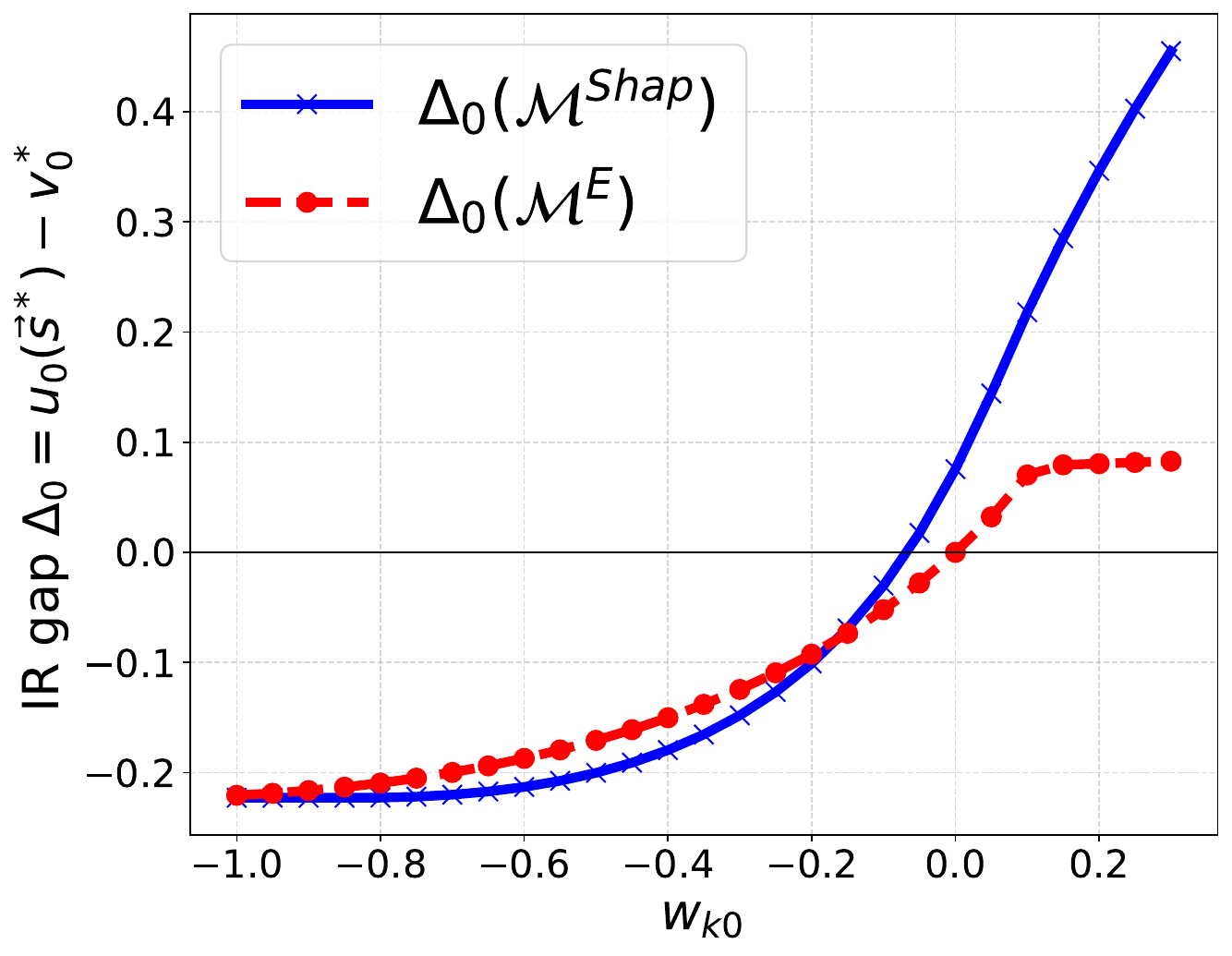}
        \caption{$w_{0k} = -0.1$}
        \label{fig:ir_neg}
    \end{subfigure}
    \hfill
    \begin{subfigure}{0.32\textwidth}
        \includegraphics[width=\textwidth]{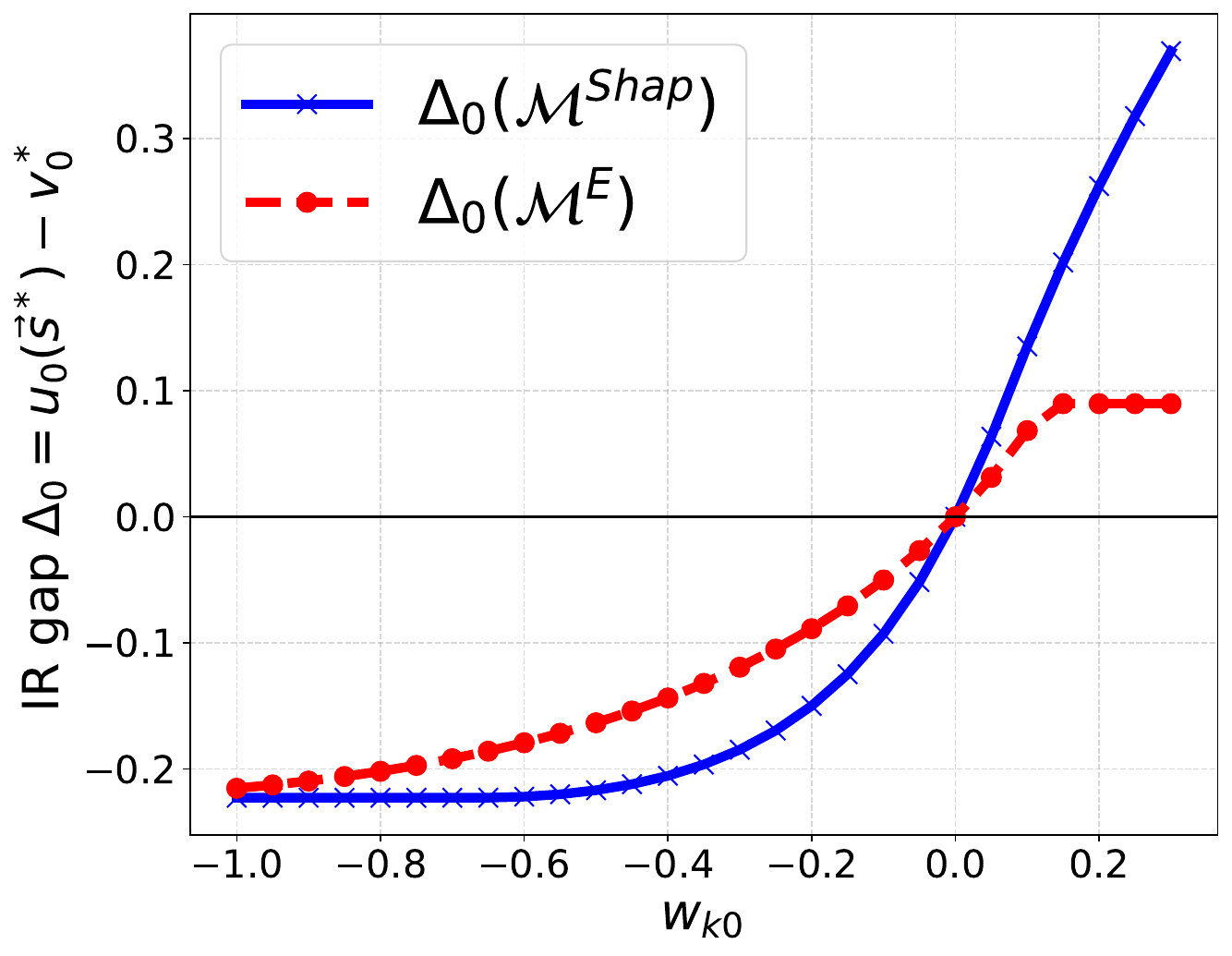}
        \caption{$w_{0k} = 0$}
        \label{fig:ir_zero}
    \end{subfigure}
    \hfill
    \begin{subfigure}{0.32\textwidth}
        \includegraphics[width=\textwidth]{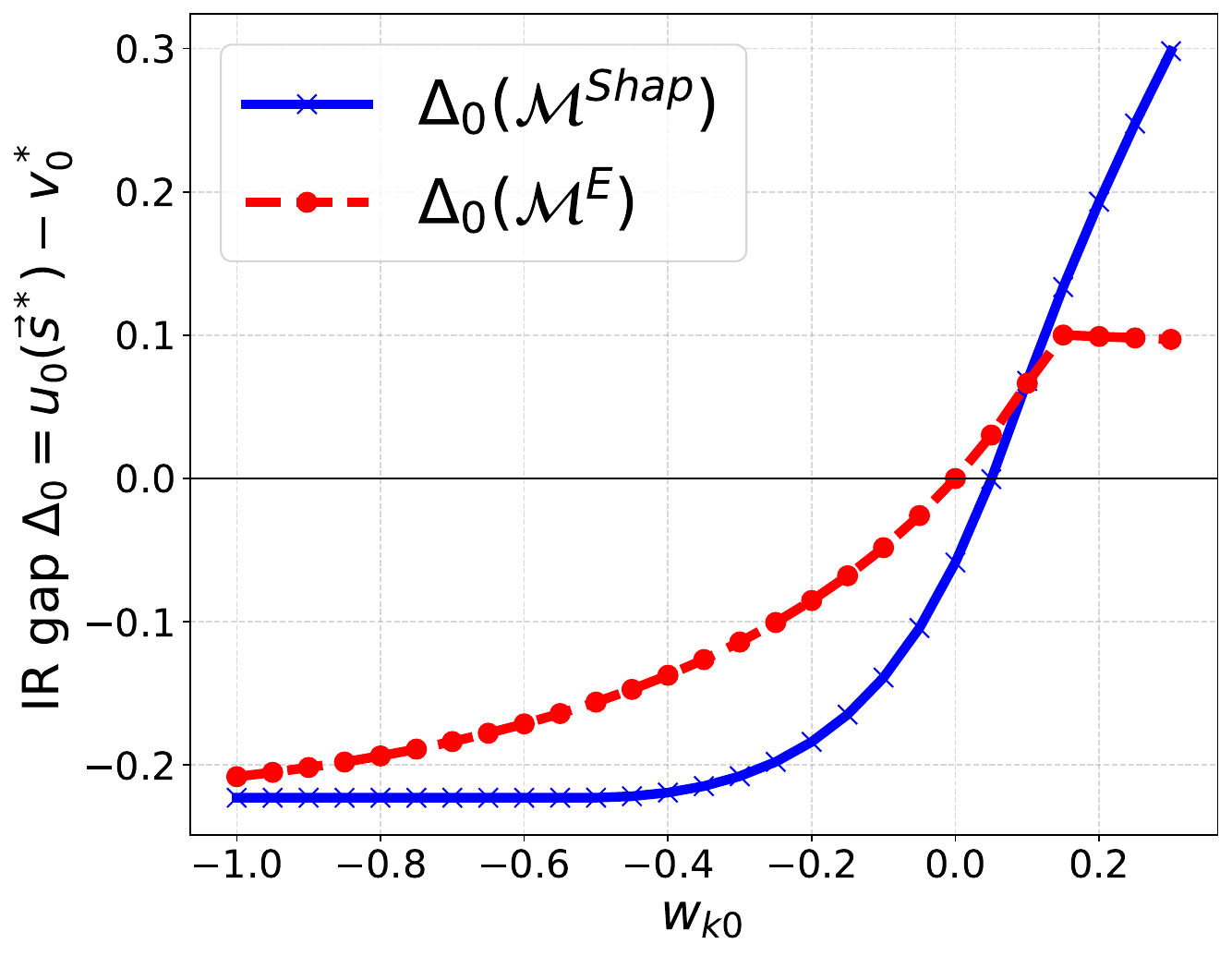}
        \caption{$w_{0k} = 0.1$}
        \label{fig:ir_pos}
    \end{subfigure}
        \caption{IR gaps for agent 0 ($\Delta_0$) vs. outgoing weight $w_{k0}$, for 
    $w_{0k} \in \{-0.1,\, 0,\, 0.1\}$.}
    \label{fig:ir_negative_weights}
\end{figure}

Figure~\ref{fig:ir_negative_weights} reports the IR gaps $\Delta_0(\mathcal{M}^E)$ and $\Delta_0(\mathcal{M}^{Shap})$ as a function of $w_{k0}$, with more negative values indicating greater harm. The zero crossing of each curve marks the threshold below which the mechanism fails to retain agent 0. In Figure \ref{fig:ir_neg}, agent 0 is passively harmed by others ($w_{0k} = -0.1$). Consistent with Proposition~\ref{prop:Ti}, $\mathcal{M}^E$'s zero crossing remains at $w_{k0} = 0$.  Under $\mathcal{M}^{Shap}$, the zero crossing shifts to a more negative $w_{k0}$, meaning $\mathcal{M}^{Shap}$ retains agent 0 even when it causes mild harm to others. In this case, empirically, the payment $p_0^{Shap}$ is positive throughout the entire range of $w_{k0}$, meaning $\mathcal{M}^{Shap}$ pays agent 0 to compensate for the incoming harm that decreases $a_0(\vec{s}^*)$ below $\varphi_0^A(\vec{s}^*)$. This positive payment shifts the $\Delta_0$ curve, moving $\mathcal{M}^{Shap}$'s zero crossing to a more negative $w_{k0}$. In Figure~\ref{fig:ir_zero}, $w_{0k} = 0$, so agent 0 neither benefits from nor is harmed by others' data. Both mechanisms' IR gaps cross zero at $w_{k0} = 0$. In Figure~\ref{fig:ir_pos}, agent 0 passively benefits from others ($w_{0k} = 0.1$). Under $\mathcal{M}^{Shap}$, the zero crossing shifts to a more positive $w_{k0}$, meaning $\mathcal{M}^{Shap}$ requires agent 0 to provide a strictly positive benefit to others before retaining them. This is because the incoming benefit inflates $a_0(\vec{s}^*)$ above $\varphi_0^A(\vec{s}^*)$, generating a negative payment $p_0^{Shap}$ that shifts $\mathcal{M}^{Shap}$'s zero crossing to a more positive $w_{k0}$. 

Together, these results illustrate that $\mathcal{M}^E$ and $\mathcal{M}^{Shap}$ respond to data heterogeneity differently: $\mathcal{M}^E$ conditions participation on the externality an agent imposes on others, while $\mathcal{M}^{Shap}$ additionally accounts for the passive benefit an agent receives from the federation, making its participation standard sensitive to both the incoming and outgoing structure of the data relationship.

\section{Conclusion and Future Work}\label{sec:conclusion}
This paper establishes three main results: $\mathcal{M}^{Shap}$ achieves perfect reciprocity but fails social optimality at every interior equilibrium and can fail IR under data heterogeneity, while $\mathcal{M}^E$ achieves social optimality and guarantees IR under non-negative data heterogeneity, but sacrifices reciprocal fairness.

Thus, $\mathcal{M}^{Shap}$ is preferable when reciprocal fairness is the primary concern. It guarantees each agent's net benefit reflects their marginal contribution, and is well suited to homogeneous settings where IR is guaranteed. $\mathcal{M}^E$ is preferable when social welfare maximization is the priority. It implements the 
socially optimal contribution profile by design and maintains IR under arbitrary non-negative data heterogeneity, at the cost of sacrificing reciprocal fairness. We note that in these settings where IR is guaranteed, the absence of reciprocity may be a reasonable tradeoff: agents are already better off participating than training alone, so while some may receive disproportionately higher rewards than their contributions warrant, participation remains individually rational for all.

Under negative externalities, $\mathcal{M}^E$ is slightly advantaged: when an agent's data harms the federation on net, IR fails for $\mathcal{M}^E$ by Proposition~\ref{prop:Ti}, correctly identifying this agent as one whose participation should not be incentivized. $\mathcal{M}^{Shap}$'s behavior in this case is more nuanced, as its participation criterion depends on both the outgoing and incoming structure of the data relationship, and a general analytical characterization remains open.

Our analysis and simulations assume that all data samples contribute equally to model performance. In practice, however, individual data may have heterogeneous effects on accuracy. This is also a future direction worth pursuing.

\medskip
\noindent\textbf{Disclosure of Interests.} The authors have no competing interests to declare that are relevant to the content of this article.

\bibliographystyle{splncs04}
\bibliography{references}
\appendix
\renewcommand{\theHsection}{appendix.\thesection}

\section{Derivation of the Externality Mechanism for FL}
\label{app:ext_deriv}

We derive the payment rule \eqref{eq:ext_payment_explicit} 
following the approach of 
\cite{sharma2012localpublicgoodprovisioning}. Every Nash 
equilibrium $\vec{s}^*$ of $\mathcal{M}^E$ is the 
socially optimal contribution profile, i.e. the 
maximizer of total welfare $W(\vec{s})$ over 
$\mathcal{S}$:
\begin{equation}
     \quad \max_{\vec{s}} 
    \sum_{i \in N} \left[ a_i(\vec{s}) - c_i(s_i) 
    \right] \quad \text{s.t.} \quad s_i \in [0, \tau_i] 
    \quad \forall i \in N.
    \label{eq:PC}
\end{equation}
The KKT conditions of this optimization 
problem yield the personalized prices $l^*_{ij}$, from 
which the payment rule follows directly.

\subsection{KKT Conditions of the Social Optimum}

Since the objective in \eqref{eq:PC} is concave in $\vec{s}$ and 
$\mathcal{S}$ is convex and compact, the KKT conditions are necessary and 
sufficient for optimality. Taking the gradient of the Lagrangian with respect to  $s_i$ and setting it to zero gives the stationarity condition for each $i \in N$ :

\begin{equation}
    \beta \sum_{k \in N} w_{ki} \xi_k(\vec{s}^*)^{-\beta-1} - \gamma_i 
    + \lambda_i - \mu_i = 0,
    \label{eq:KKT_stat}
\end{equation}

where $\lambda_i \geq 0$ and $\mu_i \geq 0$ are the KKT multipliers for the 
constraints $s_i \geq 0$ and $s_i \leq \tau_i$ respectively, with 
complementary slackness conditions $\lambda_i s^*_i = 0$ and 
$\mu_i(s^*_i - \tau_i) = 0$. Under Assumption \ref{assump:interior}, 
$\lambda_i = \mu_i = 0$ for all $i \in N$, so \eqref{eq:KKT_stat} simplifies 
to:
\begin{equation}
    \beta \sum_{k \in N} w_{ki} \xi_k(\vec{s}^*)^{-\beta-1} = \gamma_i, 
    \quad \forall i \in N.
    \label{eq:KKT_interior}
\end{equation}

\subsection{Derivation of Personalized Prices}

Following \cite{sharma2012localpublicgoodprovisioning}, we define the personalized price $l_{ij}^*$ as the marginal utility that agent $i$ derives from agent $j$'s contribution at $\vec{s}^*$, i.e. $l^*_{ij} = 
\frac{\partial u_i(\vec{s})}{\partial s_j}\big|_{\vec{s}^*}$. Since 
$u_i(\vec{s}) = a_i(\vec{s}) - c_i(s_i)$ and $c_i(s_i) = \gamma_i s_i$ 
depends only on $s_i$, for $j \neq i$:

\begin{equation}
    l^*_{ij} = \frac{\partial a_i(\vec{s})}{\partial s_j}\bigg|_{\vec{s}^*}
    = \beta w_{ij} \xi_i(\vec{s}^*)^{-\beta-1}, \quad j \neq i,
\end{equation}
where $\frac{\partial \xi_i}{\partial s_j} = w_{ij}$ given the definition of $\xi$.
For $j = i$, the price additionally accounts for agent 
$i$'s marginal cost:
\begin{equation}
    l^*_{ii} = \frac{\partial a_i(\vec{s})}{\partial s_i}\bigg|_{\vec{s}^*} 
    - \gamma_i = \beta \xi_i(\vec{s}^*)^{-\beta-1} - \gamma_i,
\end{equation}
where we used $w_{ii} = 1$ and the fact that $\lambda_i = \mu_i = 0$ under Assumption
\ref{assump:interior}. Together these match \eqref{eq:prices}.

\subsection{The $\mathcal{M}^E$ Payment Rule}

Substituting the personalized prices into $p_i^E(\vec{s}^*) = 
-\sum_{j \in N} l^*_{ij} s^*_j$ and splitting into $j = i$ and $j \neq i$:
\begin{align}
    p_i^E(\vec{s}^*) 
    &= -l^*_{ii} s^*_i - \sum_{j \neq i} l^*_{ij} s^*_j \notag \\
    &= -\left(\beta \xi_i(\vec{s}^*)^{-\beta-1} - \gamma_i\right) s^*_i 
    - \sum_{j \neq i} \beta w_{ij} \xi_i(\vec{s}^*)^{-\beta-1} s^*_j 
    \notag \\
    &= \gamma_i s^*_i - \beta \xi_i(\vec{s}^*)^{-\beta-1} 
    \left(s^*_i + \sum_{j \neq i} w_{ij} s^*_j\right) \notag \\
    &= \gamma_i s^*_i - \beta \xi_i(\vec{s}^*)^{-\beta-1} 
    \sum_{j \in N} w_{ij} s^*_j \notag \\
    &= \gamma_i s^*_i - \beta \xi_i(\vec{s}^*)^{-\beta-1} 
    \left(\xi_i(\vec{s}^*) - 1\right),
    \label{eq:payment_derived}
\end{align}
where the last step uses $\sum_{j \in N} w_{ij} s^*_j = \xi_i(\vec{s}^*) 
- 1$, establishing \eqref{eq:ext_payment_explicit}.

\section{Supplementaries of Section 4}
\label{app:mshap_SO}
\subsection{Proof of Theorem \ref{thm:mshap_SO}}

\begin{proof}
Suppose for contradiction that $\vec{s}^*$ is an interior Nash equilibrium of $\mathcal{M}^{\text{Shap}}$ 
that maximizes total welfare $W(\vec{s})$.

\medskip
\noindent\textbf{Step 1: FOC at the NE of $\mathcal{M}^{\text{Shap}}$.}
Under $\mathcal{M}^{\text{Shap}}$, $p^{\text{Shap}}_i(\vec{s}) = \varphi^A_i(\vec{s}) - a_i(\vec{s})$ for all $\vec{s}$, so agent $i$'s utility is $u^{\text{Shap}}_i(\vec{s}) =\varphi^A_i(\vec{s}) - \gamma_i s_i$ for all $\vec{s}$. Since $\varphi^A_i$ is concave in $s_i$ (each term $A(\vec{s}[X \cup \{i\}])$ is concave in $s_i$ since $\xi_j$ is linear in $s_i$ and $-\xi_j^{-\beta}$ is concave; $\varphi^A_i$ is then concave as a positive weighted sum) and the NE is interior, the FOC is necessary and sufficient:
\begin{equation}
\label{eq:SO_FOC1}
    \frac{\partial \varphi^A_i}{\partial s_i}(\vec{s}^*) = \gamma_i \quad \forall i \in N. 
\end{equation}

\noindent\textbf{Step 2: FOC at the social optimum.} Since $W(\vec{s})$ is concave (each $a_i(\vec{s}) = 1 - \xi_i(\vec{s})^{-\beta}$ 
is concave in $\vec{s}$ and $-\gamma_i s_i$ is linear, so $W$ is concave as a sum of concave functions) and the maximizer is interior by hypothesis:
\begin{equation}
\label{eq:SO_FOC2}
    \frac{\partial A}{\partial s_i}(\vec{s}^*) = \sum_{j \in N} \beta w_{ji} \xi_j(\vec{s}^*)^{-\beta-1} = \gamma_i \quad \forall i \in N. 
\end{equation}

\noindent\textbf{Step 3: Necessary condition for simultaneous achievability.} Since both \eqref{eq:SO_FOC1} and \eqref{eq:SO_FOC2} equal $\gamma_i$:
\begin{equation}
\label{eq:SO_3}
    \frac{\partial \varphi^A_i}{\partial s_i}(\vec{s}^*) = \frac{\partial A}{\partial s_i}(\vec{s}^*) \quad \forall i \in N. 
\end{equation}

\noindent\textbf{Step 4: Expanding $\frac{\partial \varphi^A_i}{\partial s_i}$.} Differentiating \eqref{eq:shapley} with respect to $s_i$. Since $i \notin X$ for all $X \subseteq N \setminus \{i\}$, we have $\frac{\partial}{\partial s_i} A(\vec{s}[X]) = 0$. Therefore:
\begin{equation}
\label{eq:SO_4}
    \frac{\partial \varphi^A_i}{\partial s_i}(\vec{s}) = \sum_{X \subseteq N \setminus \{i\}} w_X \cdot \frac{\partial A}{\partial s_i}(\vec{s}[X \cup \{i\}])
\end{equation}
where $w_X = \frac{|X|!(n-|X|-1)!}{n!} > 0$ and $\sum_{X \subseteq N \setminus \{i\}} w_X = 1$, so \eqref{eq:SO_4} is a weighted average of $\frac{\partial A}{\partial s_i}$ across sub-coalition profiles.

\medskip
\noindent\textbf{Step 5: Strict inequality.}
For any $X \subseteq N \setminus \{i\}$, agents outside $X \cup \{i\}$ contribute zero in $\vec{s}[X \cup \{i\}]$, so by Assumption \ref{assump:weights} ($w_{jk} \geq 0$), for all $j \in N$:
\begin{equation}
\label{eq:SO_5a}
    \xi_j(\vec{s}[X \cup \{i\}]) = \sum_{k \in X \cup \{i\}} w_{jk} s_k + 1 \leq \sum_{k \in N} w_{jk} s_k + 1 = \xi_j(\vec{s}).
\end{equation}
Since $x \mapsto x^{-\beta-1}$ is strictly decreasing on $(0, \infty)$ with derivative $-(\beta+1)x^{-\beta-2} < 0$, and $\xi_j(\vec{s}) \geq 1 > 0$ for all $\vec{s} \in \mathcal{S}$:
\begin{equation}
\label{eq:SO_5b}
    \frac{\partial A}{\partial s_i}(\vec{s}[X \cup \{i\}]) \geq \frac{\partial A}{\partial s_i}(\vec{s}) \quad \forall X \subseteq N \setminus \{i\}.
\end{equation}
For strict inequality, consider $X = \emptyset$. By Assumption \ref{assump:weights}, at least three agents derive positive value from agent $i$'s data, so there exists $j \neq i$ with $w_{ji} > 0$; fix such a $j$. Since $w_{jj} = 1$ and $s_j > 0$ because $\vec{s}$ is interior:
\begin{equation}
\label{eq:SO_5c}
    \xi_j(\vec{s}[\{i\}]) = w_{ji} s_i + 1 < \sum_{k \in N} w_{jk} s_k + 1 = \xi_j(\vec{s}),
\end{equation}
since $\sum_{k \neq i} w_{jk} s_k \geq w_{jj} s_j = s_j > 0$. Since $w_{ji} > 0$, the $j$-th term in $\frac{\partial A}{\partial s_i}$ is strictly larger at $\vec{s}[\{i\}]$ than at $\vec{s}$, giving $\frac{\partial A}{\partial s_i}(\vec{s}[\{i\}]) > \frac{\partial A}{\partial s_i}(\vec{s})$. Since $w_\emptyset = \frac{0!(n-1)!}{n!} = \frac{1}{n} > 0$, the weighted average in \eqref{eq:SO_4} strictly exceeds $\frac{\partial A}{\partial s_i}(\vec{s})$ at any interior $\vec{s}$:
\begin{equation}
\label{eq:SO_5d}
    \frac{\partial \varphi^A_i}{\partial s_i}(\vec{s}) > \frac{\partial A}{\partial s_i}(\vec{s}) \quad \forall \vec{s} \text{ interior}, \quad \forall i \in N.
\end{equation}

\noindent\textbf{Step 6: Contradiction.}
\eqref{eq:SO_5d} states that the strict inequality holds at every interior point, so \eqref{eq:SO_3} cannot hold at $\vec{s}^*$. This contradicts our assumption that $\vec{s}^*$ simultaneously satisfies \eqref{eq:SO_FOC1} and \eqref{eq:SO_FOC2}. Therefore no interior Nash equilibrium of $\mathcal{M}^{\text{Shap}}$ maximizes $W(\vec{s})$. \qed
\end{proof}

\subsection{Proof of Corollary \ref{cor:overcontribution}}
\begin{proof}
Let $\vec{s}^{\circ}$ be any welfare maximizer. By Step~1 of the proof of \Cref{thm:mshap_SO}, which uses only that the equilibrium is interior, \eqref{eq:SO_FOC1} holds at $\vec{s}^{NE}$; combined with \eqref{eq:SO_5d}, for every $i \in N$:
\begin{equation}
\label{eq:gradW_neg}
    \frac{\partial W}{\partial s_i}(\vec{s}^{NE})
    = \frac{\partial A}{\partial s_i}(\vec{s}^{NE}) - \gamma_i
    = \frac{\partial A}{\partial s_i}(\vec{s}^{NE})
    - \frac{\partial \varphi_i^A}{\partial s_i}(\vec{s}^{NE}) < 0.
\end{equation}

Suppose for contradiction that $\vec{s}^{\circ} \geq \vec{s}^{NE}$ componentwise. Since $W$ is concave,
\begin{equation*}
    W(\vec{s}^{\circ}) \leq W(\vec{s}^{NE})
    + \nabla W(\vec{s}^{NE})^\top (\vec{s}^{\circ} - \vec{s}^{NE})
    \leq W(\vec{s}^{NE}),
\end{equation*}
where the second inequality uses \eqref{eq:gradW_neg} and
$\vec{s}^{\circ} - \vec{s}^{NE} \geq 0$. By \Cref{thm:mshap_SO}, $\vec{s}^{NE}$ is not a welfare maximizer, so $W(\vec{s}^{\circ}) > W(\vec{s}^{NE})$, a contradiction. Hence $s_i^{\circ} < s_i^{NE}$ for some $i \in N$.

In the homogeneous case, $w_{ij} = 1$ gives $\xi_i(\vec{s}) = \|\vec{s}\|_1 + 1$ for every $i$, and $\gamma_i = \gamma$ gives $\sum_i \gamma_i s_i = \gamma \|\vec{s}\|_1$, so $W(\vec{s}) = h(\|\vec{s}\|_1)$ with $h(z) = n[1 - (z+1)^{-\beta}] - \gamma z$ on $[0, Z]$, $Z := \sum_i \tau_i$. Write $a := \|\vec{s}^{NE}\|_1$ and note $a > 0$ since $\vec{s}^{NE}$ is interior. By \eqref{eq:gradW_neg}, $h'(a) = n\beta(a+1)^{-\beta-1} - \gamma < 0$, and $h'$ is strictly decreasing. Let $z^{\circ} := \|\vec{s}^{\circ}\|_1$, a maximizer of $h$ over $[0, Z]$. If $z^{\circ} = Z$, optimality requires $h'(Z) \geq 0$, impossible since $Z \geq a$ and $h'(Z) \leq h'(a) < 0$. If $z^{\circ} \in (0, Z)$, then $h'(z^{\circ}) = 0 > h'(a)$, and strict monotonicity of $h'$ forces $z^{\circ} < a$. If $z^{\circ} = 0$, then $z^{\circ} < a$ directly. In all cases $\|\vec{s}^{\circ}\|_1 <\|\vec{s}^{NE}\|_1$. \qed
\end{proof}

\section{Supplementaries of Section 5}
\subsection{Proof of Theorem \ref{thm:mshap_IR_homo}}
\label{app:mshap_IR_homo}

\begin{proof}

Fix any agent $i \in N$ and any NE $\vec{s}^*$ of $\mathcal{M}^{\text{Shap}}$. We show $u^{\text{Shap}}_i(\vec{s}^*) \geq v^*_i$.

\medskip
\noindent\textbf{Step 1: NE definition. } 
Since $\vec{s}^*$ is a Nash equilibrium and $s^{\text{solo}}_i \in S_i$ by definition, agent $i$ cannot improve by deviating to $s^{\text{solo}}_i$:
\begin{equation}
\label{eq:IR_NE}
    u^{\text{Shap}}_i(\vec{s}^*) \geq 
    u^{\text{Shap}}_i(s^{\text{solo}}_i, \vec{s}^*_{-i}).
\end{equation}

\noindent\textbf{Step 2: Expand right side. }
Since $p^{\text{Shap}}_i(\vec{s}) = \varphi^A_i(\vec{s})- a_i(\vec{s})$ for all $\vec{s}$, agent $i$'s utility simplifies to $u^{\text{Shap}}_i(\vec{s}) =\varphi^A_i(\vec{s}) - \gamma s_i$ for all $\vec{s}$. Therefore:

\begin{equation}
\label{eq:IR_expand}
    u^{\text{Shap}}_i(s^{\text{solo}}_i, \vec{s}^*_{-i}) 
    = \varphi^A_i(s^{\text{solo}}_i, \vec{s}^*_{-i}) 
    - \gamma s^{\text{solo}}_i.
\end{equation}

Substituting $v^*_i =a_i(s^{\text{solo}}_i, \mathbf{0}_{-i}) - \gamma s^{\text{solo}}_i  = 1-(s^{\text{solo}}_i+1)^{-\beta}-\gamma s^{\text{solo}}_i$ and cancelling $\gamma s^{\text{solo}}_i$ from both sides, it suffices to show:

\begin{equation}
\label{eq:IR_suffices}
    \varphi^A_i(s^{\text{solo}}_i, \vec{s}^*_{-i}) 
    \geq 
    a_i(s^{\text{solo}}_i, \mathbf{0}_{-i}).
\end{equation}

\noindent\textbf{Step 3: Shapley value lower bound. } From the Shapley formula \eqref{eq:shapley} with $\vec{s} = (s^{\text{solo}}_i, \vec{s}^*_{-i})$:
\begin{equation}
    \varphi^A_i(s^{\text{solo}}_i, \vec{s}^*_{-i}) = 
    \sum_{X \subseteq N \setminus \{i\}} 
    \frac{|X|!(n-|X|-1)!}{n!} 
    [A(\vec{s}[X \cup \{i\}]) - A(\vec{s}[X])].
\end{equation}
We show the empty coalition term equals $a_i(s^{\text{solo}}_i, \mathbf{0}_{-i})$ and all other terms are non-negative.

\medskip
\noindent\textit{Empty coalition term ($X = \emptyset$). } 
We first establish this for general payoff functions satisfying $a_i(\mathbf{0}) = 0$ for all $i \in N$ and $a_j(s, \mathbf{0}_{-i}) = a_i(s, \mathbf{0}_{-i})$ for all $i, j \in N$. Since $A(\mathbf{0}) = 0$, the empty coalition term is $\frac{1}{n} A(s^{\text{solo}}_i, \mathbf{0}_{-i})$. In the general case, since $a_j(s^{\text{solo}}_i, \mathbf{0}_{-i}) = a_i(s^{\text{solo}}_i, \mathbf{0}_{-i})$ for all $j \in N$:

\begin{equation}
    A(s^{\text{solo}}_i, \mathbf{0}_{-i}) = 
    \sum_{j \in N} a_j(s^{\text{solo}}_i, \mathbf{0}_{-i}) 
    = n \cdot a_i(s^{\text{solo}}_i, \mathbf{0}_{-i}).
\end{equation}

\noindent so the empty coalition term equals 
$\frac{1}{n} \cdot n \cdot a_i(s^{\text{solo}}_i, 
\mathbf{0}_{-i}) = a_i(s^{\text{solo}}_i, 
\mathbf{0}_{-i})$. In our specific problem formulation, since $A(\mathbf{0}) = 0$ and in the homogeneous case $w_{ji} = 1$ for all $j$:
\begin{equation}
    \xi_j(s^{\text{solo}}_i, \mathbf{0}_{-i}) = 
    s^{\text{solo}}_i + 1 \quad \forall j \in N,
\end{equation}

\noindent so $A(s^{\text{solo}}_i, \mathbf{0}_{-i}) =n[1-(s^{\text{solo}}_i+1)^{-\beta}]$. The empty coalition term is therefore:

\begin{equation}
    \frac{1}{n} \cdot n[1-(s^{\text{solo}}_i+1)^{-\beta}] 
    = 1-(s^{\text{solo}}_i+1)^{-\beta} = 
    a_i(s^{\text{solo}}_i, \mathbf{0}_{-i}).
\end{equation}

\medskip
\noindent\textit{All other terms non-negative ($X \neq \emptyset$). } We first establish this for general payoff functions satisfying $a_j$ non-decreasing in $s_i$. In the general case, since adding agent $i$'s contribution $s^{\text{solo}}_i \geq 0$ to any coalition $X$ weakly increases $a_j$ for all $j \in N$ by monotonicity:
\begin{equation}
\label{eq:A_diff_geq_0}
    A(\vec{s}[X \cup \{i\}]) - A(\vec{s}[X]) 
    \geq 0 \quad \forall X \subseteq N \setminus 
    \{i\}.
\end{equation}

In our specific formulation, for any $X \subseteq N \setminus \{i\}$, adding agent $i$'s contribution $s^{\text{solo}}_i$ to coalition $X$ increases $\xi_j$ by $w_{ji} s^{\text{solo}}_i = s^{\text{solo}}_i \geq 0$ for all $j \in N$, where $s^{\text{solo}}_i \geq 0$ since $s^{\text{solo}}_i \in S_i$. Since $\frac{\partial a_j}{\partial \xi_j} = \beta 
\xi_j^{-\beta-1} > 0$, each term $a_j(\vec{s}[X \cup \{i\}]) - a_j(\vec{s}[X]) 
\geq 0$, so \eqref{eq:A_diff_geq_0} holds. Combining, $\varphi^A_i(s^{\text{solo}}_i, \vec{s}^*_{-i}) \geq a_i(s^{\text{solo}}_i, 
\mathbf{0}_{-i})$, establishing \eqref{eq:IR_suffices}.

\medskip
\noindent\textbf{Step 4: Conclude. }  From \eqref{eq:IR_NE}, \eqref{eq:IR_expand}, and \eqref{eq:IR_suffices}:
\begin{equation}
    u^{\text{Shap}}_i(\vec{s}^*) \geq 
    \varphi^A_i(s^{\text{solo}}_i, \vec{s}^*_{-i}) 
    - \gamma s^{\text{solo}}_i \geq 
    1-(s^{\text{solo}}_i+1)^{-\beta} - \gamma 
    s^{\text{solo}}_i = v^*_i.
\end{equation}

Therefore $\Delta_i(\mathcal{M}^{\text{Shap}}) = u^{\text{Shap}}_i(\vec{s}^*) - v^*_i \geq 0$
for all $i \in N$. \qed
    
\end{proof}

\subsection{Proof of Theorem \ref{thm:IR_ext}}
\label{app:IR_ext}

By the game form described in Section 3.2, every Nash equilibrium of $\mathcal{M}^E$ implements a social optimum $\vec{s}^*$, and under Assumption \ref{assump:interior}, the equilibrium payment coincides with \eqref{eq:ext_payment_explicit}. It therefore suffices to show $u^E_i(\vec{s}^*) \geq v^*_i$.

\medskip
\noindent\textbf{Step 1: Simplification of $u_i^E(\vec{s}^*)$. } Substituting the payment rule \eqref{eq:ext_payment_explicit} and cost 
$c_i(s^*_i) = \gamma_i s^*_i$ into $u_i^E(\vec{s}^*) = a_i(\vec{s}^*) 
- c_i(s^*_i) + p_i^E(\vec{s}^*)$:
\begin{align}
    u_i^E(\vec{s}^*) 
    &= a_i(\vec{s}^*) - \gamma_i s^*_i + \gamma_i s^*_i 
    - \beta \xi_i(\vec{s}^*)^{-\beta-1}\left(\xi_i(\vec{s}^*) - 1\right) 
    \notag \\
    &= a_i(\vec{s}^*) - \beta \xi_i(\vec{s}^*)^{-\beta-1}
    \left(\xi_i(\vec{s}^*) - 1\right),
\end{align}
where the $\gamma_i s^*_i$ terms cancel. Substituting $a_i(\vec{s}^*) = 
1 - \xi_i(\vec{s}^*)^{-\beta}$ from \eqref{eq:payoff} and writing $\xi_i 
:= \xi_i(\vec{s}^*)$ for brevity:
\begin{align}
    u_i^E(\vec{s}^*) 
    &= 1 - \xi_i^{-\beta} - \beta\xi_i^{-\beta-1}(\xi_i - 1) \notag \\
    &= 1 - \xi_i^{-\beta} - \beta\xi_i^{-\beta} + \beta\xi_i^{-\beta-1} 
    \notag \\
    &= 1 - (1+\beta)\xi_i^{-\beta} + \beta\xi_i^{-\beta-1},
    \label{eq:u_ext_simplified}
\end{align}
where we expanded $\beta\xi_i^{-\beta-1}(\xi_i - 1) = \beta\xi_i^{-\beta} 
- \beta\xi_i^{-\beta-1}$.

\medskip
\noindent\textbf{Step 2: Simplification of $v_i^*$. } Recall the standalone utility from \eqref{eq:standalone}. When all other 
agents contribute zero, $\xi_i(s_i, \mathbf{0}_{-i}) = s_i + 1$ since 
$w_{ii} = 1$ and $w_{ij} s_j = 0$ for $j \neq i$. So:
\begin{equation}
    v_i^* = \max_{s_i \in S_i} \left[ 1 - (s_i + 1)^{-\beta} 
    - \gamma_i s_i \right].
\end{equation}

Under Assumption \ref{assump:interior_solo}, $s_i^{\text{solo}} \in (0, \tau_i)$, 
so the first-order condition holds as an equality:
\begin{equation}
    \beta(s_i^{\text{solo}} + 1)^{-\beta-1} = \gamma_i.
    \label{eq:solo_foc}
\end{equation}

Note that \eqref{eq:solo_foc} requires $\beta > \gamma_i$, since the 
marginal benefit at zero standalone contribution is 
$\beta(0+1)^{-\beta-1} = \beta$, which must exceed $\gamma_i$ for the 
lower bound $s_i^{\text{solo}} = 0$ to not be active; this is necessary 
and sufficient for $s_i^{\text{solo}} > 0$ and is guaranteed by Assumption \ref{assump:interior_solo}. Substituting \eqref{eq:solo_foc} into 
$v_i^* = 1 - (s_i^{\text{solo}}+1)^{-\beta} - \gamma_i s_i^{\text{solo}}$:
\begin{align}
    v_i^* 
    &= 1 - (s_i^{\text{solo}}+1)^{-\beta} 
    - \beta(s_i^{\text{solo}}+1)^{-\beta-1} s_i^{\text{solo}} \notag \\
    &= 1 - (s_i^{\text{solo}}+1)^{-\beta} - \beta(s_i^{\text{solo}}+1)^{-\beta} 
    + \beta(s_i^{\text{solo}}+1)^{-\beta-1} \notag \\
    &= 1 - (1+\beta)(s_i^{\text{solo}}+1)^{-\beta} 
    + \beta(s_i^{\text{solo}}+1)^{-\beta-1},
    \label{eq:v_star_simplified}
\end{align}
where we expanded $\beta(s_i^{\text{solo}}+1)^{-\beta-1} s_i^{\text{solo}} 
= \beta(s_i^{\text{solo}}+1)^{-\beta} - \beta(s_i^{\text{solo}}+1)^{-\beta-1}$. Comparing \eqref{eq:u_ext_simplified} and \eqref{eq:v_star_simplified}, 
both $u_i^E(\vec{s}^*)$ and $v_i^*$ have the same functional form, 
evaluated at $\xi_i(\vec{s}^*)$ and $s_i^{\text{solo}} + 1$ respectively. 

\medskip
\noindent\textbf{Step 3: Define $f$ and show monotonicity. }
\begin{lemma}
\label{lem:f_monotone}
The function $f: [1,\infty) \to \mathbb{R}$ defined by:
\begin{equation}
    f(x) := (1+\beta)x^{-\beta} - \beta x^{-\beta-1}
    \label{eq:f_def}
\end{equation}
is non-increasing on $[1, \infty)$.
\end{lemma}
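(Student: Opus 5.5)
The plan is to compute the derivative of $f$ directly and sign it on $[1,\infty)$. Since $f$ is a finite combination of power functions, it is smooth on $(0,\infty)$, and in particular on $[1,\infty)$. Differentiating term by term gives
\begin{equation*}
f'(x) = -\beta(1+\beta)x^{-\beta-1} + \beta(\beta+1)x^{-\beta-2}.
\end{equation*}
Factoring out the common term, this becomes
\begin{equation*}
f'(x) = \beta(1+\beta)x^{-\beta-2}\left(1 - x\right).
\end{equation*}

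Next we sign the two factors. With $\beta \in (0,1]$ and $x \geq 1$, the prefactor $\beta(1+\beta)x^{-\beta-2}$ is strictly positive. The factor $(1-x)$ is non-positive for $x \geq 1$, vanishing only at $x=1$. Hence $f'(x) \leq 0$ on $[1,\infty)$, with strict inequality on $(1,\infty)$.

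To conclude, apply the mean value theorem: for $1 \leq x < y$ we have $f(y)-f(x) = f'(c)(y-x)$ for some $c \in (x,y)$, which is $\leq 0$. This shows $f$ is non-increasing, and in fact strictly decreasing, on $[1,\infty)$.

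The only step needing care is getting the factorization right, specifically that the two coefficients $\beta(1+\beta)$ coincide so the derivative collapses to the clean factor $(1-x)$. This is what forces the domain restriction $x \geq 1$, which matches $\xi_i \geq 1$ under Assumption~\ref{assump:weights}.

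For its later use, note that $u_i^E(\vec{s}^*) = 1 - f(\xi_i(\vec{s}^*))$ and $v_i^* = 1 - f(s_i^{\text{solo}}+1)$. Monotonicity of $f$ then reduces Theorem~\ref{thm:IR_ext} to showing $\xi_i(\vec{s}^*) \geq s_i^{\text{solo}}+1$, but that comparison is not part of this lemma.
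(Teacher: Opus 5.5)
Your proof is correct and follows the same approach as the paper: differentiate, factor to $f'(x)=\beta(\beta+1)x^{-\beta-2}(1-x)$, and note this is $\le 0$ for $x\ge 1$. Your explicit mean value theorem step and the observation that $f$ is in fact strictly decreasing are left implicit in the paper's proof of this lemma, though the paper does use the strict version later, in the proof of Proposition~\ref{prop:Ti}.
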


\begin{proof}
Taking the derivative:
\begin{equation}
    f'(x) = -\beta(1+\beta)x^{-\beta-1} + \beta(\beta+1)x^{-\beta-2} 
    = \beta(\beta+1)x^{-\beta-2}(1-x) \leq 0,
\end{equation}
for all $x \geq 1$, since $\beta > 0$ and $(1-x) \leq 0$.
\end{proof}

\medskip
\noindent\textbf{Step 4: Federation provides more data. }
\begin{lemma}
\label{lem:xi_geq_solo}
Under Assumptions \ref{assump:interior} and \ref{assump:interior_solo} and $w_{ij} \geq 0$ 
for all $i, j \in N$:
\begin{equation}
    \xi_i(\vec{s}^*) \geq s_i^{\text{solo}} + 1 \quad \forall i \in N.
\end{equation}
\end{lemma}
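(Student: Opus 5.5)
The plan is to compare the two first-order conditions: the interior KKT stationarity \eqref{eq:KKT_interior} at the social optimum, and the standalone FOC \eqref{eq:solo_foc}. Both are equal to $\gamma_i$, and the map $x \mapsto \beta x^{-\beta-1}$ is strictly decreasing on $(0,\infty)$. So it suffices to show that agent $i$'s own marginal payoff at $\vec{s}^*$ is at most its standalone marginal payoff, i.e.
\begin{equation*}
\beta\,\xi_i(\vec{s}^*)^{-\beta-1} \leq \beta\,(s_i^{\text{solo}}+1)^{-\beta-1}.
\end{equation*}
Monotonicity then gives $\xi_i(\vec{s}^*) \geq s_i^{\text{solo}}+1$.

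First, from \eqref{eq:KKT_interior} (valid under Assumption \ref{assump:interior}), separate the $k=i$ term using $w_{ii}=1$:
\begin{equation*}
\gamma_i = \beta\,\xi_i(\vec{s}^*)^{-\beta-1} + \beta\sum_{k\neq i} w_{ki}\,\xi_k(\vec{s}^*)^{-\beta-1}.
\end{equation*}
Because $w_{ki}\geq 0$ and $\xi_k(\vec{s}^*)\geq 1>0$, every term in the remaining sum is nonnegative. This yields $\beta\,\xi_i(\vec{s}^*)^{-\beta-1} \leq \gamma_i$.

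Second, Assumption \ref{assump:interior_solo} gives \eqref{eq:solo_foc}, namely $\gamma_i = \beta(s_i^{\text{solo}}+1)^{-\beta-1}$. Chaining the two relations produces the displayed inequality. Since $x\mapsto x^{-\beta-1}$ is strictly decreasing on $(0,\infty)$ and both arguments are positive ($\xi_i(\vec{s}^*)\geq 1$ by nonnegative weights), inverting gives $\xi_i(\vec{s}^*) \geq s_i^{\text{solo}}+1$.

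There is no real obstacle. The only point needing care is that the interior assumptions are what turn the KKT and standalone conditions into equalities, so the multipliers $\lambda_i,\mu_i$ vanish and nothing spoils the comparison. The sign condition $w_{ki}\geq 0$ is essential: it is exactly what fails in Proposition \ref{prop:Ti}, where $T_i<0$ reverses the inequality.

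Combined with Lemma \ref{lem:f_monotone}, this finishes Theorem \ref{thm:IR_ext}. We have $u_i^E(\vec{s}^*) = 1 - f(\xi_i(\vec{s}^*))$ and $v_i^* = 1 - f(s_i^{\text{solo}}+1)$, and $f$ is non-increasing on $[1,\infty)$, so $\Delta_i(\mathcal{M}^E) \geq 0$.
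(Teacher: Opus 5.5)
Your proof is correct and is essentially the paper's argument. You equate the interior KKT stationarity condition \eqref{eq:KKT_interior} with the standalone first-order condition \eqref{eq:solo_foc}, drop the nonnegative off-diagonal terms $w_{ki}\,\xi_k(\vec{s}^*)^{-\beta-1}$ to obtain $\xi_i(\vec{s}^*)^{-\beta-1} \leq (s_i^{\text{solo}}+1)^{-\beta-1}$, and invert using the strict monotonicity of $x \mapsto x^{-\beta-1}$, exactly as the paper does.
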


\begin{proof}
From the social optimum KKT condition \eqref{eq:KKT_interior} and the 
standalone FOC \eqref{eq:solo_foc}:
\begin{equation}
    \beta \sum_{k \in N} w_{ki} \xi_k(\vec{s}^*)^{-\beta-1} = \gamma_i 
    = \beta(s_i^{\text{solo}}+1)^{-\beta-1}.
    \label{eq:equal_gamma}
\end{equation}
Since $w_{ii} = 1$ and $w_{ki} \geq 0$ for all $k$:
\begin{equation}
    \beta \xi_i(\vec{s}^*)^{-\beta-1} \leq \beta \sum_{k \in N} w_{ki} 
    \xi_k(\vec{s}^*)^{-\beta-1} = \beta(s_i^{\text{solo}}+1)^{-\beta-1},
\end{equation}
where the inequality holds because the sum includes the $k=i$ term 
$\beta \xi_i(\vec{s}^*)^{-\beta-1}$ plus additional non-negative terms. 
Dividing by $\beta$ and since $x \mapsto x^{-\beta-1}$ is strictly 
decreasing:
\begin{equation}
    \xi_i(\vec{s}^*)^{-\beta-1} \leq (s_i^{\text{solo}}+1)^{-\beta-1} 
    \implies \xi_i(\vec{s}^*) \geq s_i^{\text{solo}} + 1.
\end{equation}
\end{proof}

\medskip
\noindent\textbf{Step 5: Conclude IR. }
From \eqref{eq:u_ext_simplified} and \eqref{eq:v_star_simplified}, we 
can write both utilities in terms of the function $f$ defined in 
\eqref{eq:f_def}:
\begin{equation}
    u_i^E(\vec{s}^*) = 1 - f(\xi_i(\vec{s}^*)), \qquad 
    v_i^* = 1 - f(s_i^{\text{solo}} + 1).
\end{equation}
Therefore:
\begin{equation}
    u_i^E(\vec{s}^*) - v_i^* = f(s_i^{\text{solo}} + 1) - 
    f(\xi_i(\vec{s}^*)).
    \label{eq:IR_gap_app}
\end{equation}
By \Cref{lem:xi_geq_solo}, $\xi_i(\vec{s}^*) \geq s_i^{\text{solo}} + 1 
\geq 1$, where the second inequality holds since $s_i^{\text{solo}} \geq 
0$. By \Cref{lem:f_monotone}, $f$ is non-increasing on $[1, \infty)$, 
so:
\begin{equation}
    f(\xi_i(\vec{s}^*)) \leq f(s_i^{\text{solo}} + 1).
\end{equation}
Substituting into \eqref{eq:IR_gap_app}:
\begin{equation}
    u_i^E(\vec{s}^*) - v_i^* = f(s_i^{\text{solo}} + 1) - 
    f(\xi_i(\vec{s}^*)) \geq 0,
\end{equation}
establishing $u_i^E(\vec{s}^*) \geq v_i^*$ for all $i \in N$. \qed

\section{Supplementaries of Section 6}
We begin by defining and proving Lemma \ref{lemma:symmetric_NE}, which is then applied to \Cref{thm:reciprocity_ME_homogeneous}.

\begin{lemma}
\label{lemma:symmetric_NE}
    In the homogeneous case ($w_{ij} = 1$ for all $i,j \in N$, $\gamma_i = \gamma$, and $\tau_i = \tau$ for all $i \in N$), under Assumptions \ref{assump:interior} and \ref{assump:interior_solo}, there exists a symmetric Nash equilibrium of $\mathcal{M}^E$ at which $s^*_i = \bar{s}$ for all $i \in N$, for some $\bar{s} > 0$.
\end{lemma}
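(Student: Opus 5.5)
In the homogeneous case the welfare depends on $\vec{s}$ only through the total contribution. My plan is to first find the socially optimal total, then split it equally among agents, and finally invoke the Nash implementation results of \cite{sharma2012localpublicgoodprovisioning} to realize this symmetric profile as an equilibrium.

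\textbf{Reducing welfare to a one-dimensional problem.} As in the proof of Corollary~\ref{cor:overcontribution}, $w_{ij}=1$ and $\gamma_i=\gamma$ give
\begin{equation*}
W(\vec{s}) = h(\|\vec{s}\|_1), \qquad h(z) = n\bigl[1-(z+1)^{-\beta}\bigr]-\gamma z \quad \text{on } [0, n\tau].
\end{equation*}
The function $h$ is strictly concave, since $h''(z) = -n\beta(\beta+1)(z+1)^{-\beta-2} < 0$. Hence it has a unique maximizer $z^\circ$. Moreover, $h'(0) = n\beta - \gamma \ge \beta - \gamma > 0$, where the last inequality uses Assumption~\ref{assump:interior_solo} (the standalone FOC \eqref{eq:solo_foc} forces $\beta > \gamma$, as noted in the proof of Theorem~\ref{thm:IR_ext}). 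Therefore $z^\circ > 0$.

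\textbf{Constructing the symmetric optimum.} Set $\bar{s} := z^\circ/n > 0$. Because $z^\circ \le n\tau$, we have $\bar{s} \in (0,\tau]$, so $\vec{s} = (\bar{s},\dots,\bar{s}) \in \mathcal{S}$. Since $W(\vec{s}) = h(z^\circ)$, this profile is socially optimal. Every welfare maximizer has total $z^\circ$, because the maximizer of $h$ is unique. Assumption~\ref{assump:interior} guarantees that some equilibrium implements an interior maximizer. I take this to mean the interior regime $z^\circ < n\tau$ (equivalently $h'(n\tau) < 0$), which gives $\bar{s} < \tau$. In the degenerate case one could instead read Assumption~\ref{assump:interior} as applied directly to the symmetric profile.

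\textbf{Realizing the profile as an equilibrium.} This step is the main obstacle. Theorems~1 and~2 of \cite{sharma2012localpublicgoodprovisioning} state that every NE is socially optimal and that an NE exists. To obtain a \emph{specific} optimum, I would use their converse, constructive direction: for any interior welfare maximizer $\vec{s}^\circ$, the message profile in which every agent proposes $\vec{x}^i = \vec{s}^\circ$ and announces the personalized prices $l^*_{ij}$ from the KKT conditions \eqref{eq:KKT_interior} and \eqref{eq:prices} is a Nash equilibrium of the message game. The argument runs as follows.
\begin{itemize}
\item In the Hurwicz/Sharma--Teneketzis game form, each agent faces prices determined by the other agents' messages.
\item Given those prices, agent $i$'s problem of choosing the implemented allocation is concave.
\item Its first-order condition is exactly the KKT stationarity at $\vec{s}^\circ$, so no unilateral deviation is profitable.
\end{itemize}
I would verify this at $\vec{s}^\circ = (\bar{s},\dots,\bar{s})$. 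The prices there are well defined, $l^*_{ij} = \beta(n\bar{s}+1)^{-\beta-1}$ for $j\neq i$, and \eqref{eq:KKT_interior} reads $n\beta(n\bar{s}+1)^{-\beta-1} = \gamma$, which holds by construction of $z^\circ$. This yields an NE implementing $s_i^* = \bar{s}$ for all $i$. If the cited framework only states existence, I would fall back on a symmetry argument: the game is invariant under permutations of agents, so permuting the messages of any NE yields another NE. Averaging is not automatically available for equilibria, so the constructive price argument above remains the primary route.
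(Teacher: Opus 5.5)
Your proposal is correct and follows the paper's proof essentially step for step. You reduce $W$ to $h(\|\vec{s}\|_1)$, obtain a unique maximizer $z^\circ>0$ from strict concavity and $h'(0)=n\beta-\gamma>0$, and split $z^\circ$ equally. You also get $z^\circ<n\tau$ because the interior equilibrium profile granted by Assumption~\ref{assump:interior} is itself a maximizer with total below $n\tau$. That is a direct deduction, so your hedging there is unnecessary.

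For the implementation step, the paper simply invokes Theorem~2 of \cite{sharma2012localpublicgoodprovisioning} in exactly the converse form you want: every social optimum is implemented by some NE. You can therefore drop your informal sketch of the equilibrium message profile and the permutation fallback, and cite that theorem instead.
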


\subsection{Proof of Lemma \ref{lemma:symmetric_NE}}
\label{app:symmetric_NE}

\noindent\textbf{Step 1: Reduction of welfare to a scalar problem. } In the homogeneous case $w_{ij} = 1$ for all $i,j \in N$, the weighted contribution received by agent $i$ simplifies to:
\begin{equation}
    \xi_i(\vec{s}) = \sum_{j \in N} w_{ij} s_j + 1 = \|\vec{s}\|_1 + 1,
    \label{eq:xi_homogeneous}
\end{equation}

which is identical for every agent $i \in N$ and depends on $\vec{s}$ only through the scalar $\|\vec{s}\|_1$. Substituting 
$a_i(\vec{s}) = 1 - \xi_i(\vec{s})^{-\beta}$ from \eqref{eq:payoff} 
into the welfare function \eqref{eq:welfare}:
\begin{align}
    W(\vec{s}) 
    &= \sum_{i \in N} \left[a_i(\vec{s}) - \gamma s_i\right] \notag \\
    &= \sum_{i \in N} \left[1 - (\|\vec{s}\|_1 + 1)^{-\beta} 
    - \gamma s_i\right] \notag \\
    &= n\left[1 - (\|\vec{s}\|_1+1)^{-\beta}\right] 
    - \gamma\|\vec{s}\|_1.
\end{align}

Therefore $W(\vec{s})$ depends on $\vec{s}$ only through the scalar $z := \|\vec{s}\|_1$. Define:
\begin{equation}
    h(z) := n\left[1-(z+1)^{-\beta}\right] - \gamma z, 
    \quad z \in [0, n\tau],
    \label{eq:h_def}
\end{equation}
so that $W(\vec{s}) = h(\|\vec{s}\|_1)$ for all $\vec{s} \in \mathcal{S}$. Therefore maximizing $W$ over $\mathcal{S}$ is equivalent to finding $\vec{s} \in \mathcal{S}$ with $\|\vec{s}\|_1 = z^*$, where $z^*$ is the maximizer of $h$ over $[0, n\tau]$.

\medskip
\noindent\textbf{Step 2: Unique interior maximizer of $h$. } Computing the first and second derivatives of $h$:
\begin{equation}
    h'(z) = n\beta(z+1)^{-\beta-1} - \gamma, \qquad
    h''(z) = -n\beta(\beta+1)(z+1)^{-\beta-2}.
\end{equation}
Since $\beta > 0$, $n \geq 1$, and $(z+1)^{-\beta-2} > 0$ for all $z \geq 0$, we have $h''(z) < 0$ for all $z \geq 0$, so $h$ is  strictly concave on $[0, n\tau]$ and has at most one maximizer. We show the maximizer $z^*$ is interior to $(0, n\tau)$:

\begin{enumerate}
    \item By Assumption \ref{assump:interior_solo}, $s_i^{\text{solo}} > 0$, so the standalone FOC gives $\gamma = \beta(s_i^{\text{solo}}+1)^{-\beta-1} < \beta$, 
    since $(s_i^{\text{solo}}+1)^{-\beta-1} < 1$ for $s_i^{\text{solo}} > 0$. Therefore $h'(0) = n\beta - \gamma > 0$, so $h$ is increasing at $z = 0$ and $z^* > 0$.

    \item $z^*$ exists in $(0, \infty)$: Since $h'(0) > 0$ and $h'(z) \to -\gamma < 0$ as $z \to \infty$, by the intermediate value theorem there exists $z^* \in (0, \infty)$ with $h'(z^*) = 0$. By strict concavity this $z^*$ is unique. 

    \item $z^* < n\tau$: Assumption \ref{assump:interior} presupposes that $\mathcal{M}^E$ admits an equilibrium whose implemented profile $\vec{s}'$ is interior, i.e. $s_i' < \tau$ for all $i \in N$. Since $\vec{s}'$ is a welfare maximizer and $W = h(\|\cdot\|_1)$, every welfare maximizer has total contribution $z^*$, so $z^* = \|\vec{s}'\|_1 < n\tau$.

\end{enumerate}

\medskip
\noindent\textbf{Step 3: Symmetric profile is a social optimum.} Define the symmetric profile $\vec{s}^*$ by $s^*_i = \bar{s} := z^*/n$ for all $i \in N$.We verify $\vec{s}^* \in \mathcal{S}$: since $z^* \in (0, n\tau)$, we have $\bar{s} = z^*/n \in (0, \tau)$, so $s^*_i \in [0, \tau]$ for all $i$. Furthermore:
\begin{equation}
    \|\vec{s}^*\|_1 = \sum_{i \in N} \bar{s} = n \cdot \frac{z^*}{n} 
    = z^*.
\end{equation}
Therefore $W(\vec{s}^*) = h(\|\vec{s}^*\|_1) = h(z^*)$, which is the maximum of $h$ over $[0, n\tau]$. Hence $\vec{s}^*$ is a social optimum of \eqref{eq:SO_def}.

\medskip
\noindent\textbf{Step 4: Existence of a symmetric Nash equilibrium.}
By Theorem 2 of \cite{sharma2012localpublicgoodprovisioning}, for every social optimum of 
\eqref{eq:SO_def} there exists at least one NE of $\mathcal{M}^E$ implementing it. Applying this to the symmetric 
social optimum $\vec{s}^*$ constructed in Step 3, there exists a NE of $\mathcal{M}^E$ at which $s^*_i = \bar{s}$ for all $i \in N$. \qed

\subsection{Proof of Theorem \ref{thm:reciprocity_ME_homogeneous}}

By \Cref{lemma:symmetric_NE}, there exists a symmetric NE of $\mathcal{M}^E$ at which $s^*_i = \bar{s}$ for 
all $i \in N$. Let $\vec{s}^*$ be any such symmetric NE. We show that $a_i(\vec{s}^*) + p^E_i(\vec{s}^*) = \varphi^A_i(\vec{s}^*)$ for all $i \in N$, which gives $r(\vec{s}^*) = 1$.

\medskip
\noindent\textbf{Step 1: Shapley value at a symmetric profile.} Since $\vec{s}^*$ is symmetric with $s^*_i = \bar{s}$ for all $i \in N$, and all agents have identical payoff functions in the homogeneous case, we show that all agents are interchangeable in the sense of the Shapley value. Specifically, for any two agents $i, j \in N$ and any coalition $X \subseteq N \setminus \{i,j\}$:

\begin{align}
    A(\vec{s}^*[X \cup \{i\}]) 
    &= n\left[1 - \left(\sum_{k \in X \cup \{i\}} \bar{s} 
    + 1\right)^{-\beta}\right] \notag \\
    &= n\left[1 - \left((|X|+1)\bar{s}+1\right)^{-\beta}\right] 
    \notag \\
    &= n\left[1 - \left(\sum_{k \in X \cup \{j\}} \bar{s} 
    + 1\right)^{-\beta}\right] \notag \\
    &= A(\vec{s}^*[X \cup \{j\}]),
\end{align}

where we used $w_{ij} = 1$ for all $i,j$ so that $\xi_k(\vec{s}^*[X \cup \{i\}]) = (|X|+1)\bar{s} + 1$ for all $k \in N$, depending only on $|X|$ and not on which specific agent is added. Therefore all agents make identical marginal contributions to every coalition, so by the symmetry of the Shapley value:
\begin{equation}
    \varphi^A_i(\vec{s}^*) = \varphi^A_j(\vec{s}^*) \quad 
    \forall i,j \in N.
    \label{eq:shapley_symmetric}
\end{equation}

By the efficiency property of the Shapley value:
\begin{equation}
    \sum_{i \in N} \varphi^A_i(\vec{s}^*) = A(\vec{s}^*) = 
    \sum_{i \in N} a_i(\vec{s}^*) = n \cdot a(\vec{s}^*),
\end{equation}
where the last equality uses $a_i(\vec{s}^*) = a(\vec{s}^*)$ for all $i$ at the symmetric profile. Combining with \eqref{eq:shapley_symmetric}:

\begin{equation}
    \varphi^A_i(\vec{s}^*) = \frac{n \cdot a(\vec{s}^*)}{n} = 
    a(\vec{s}^*) \quad \forall i \in N.
    \label{eq:shapley_equals_payoff}
\end{equation}

\medskip
\noindent\textbf{Step 2: Payment vanishes at symmetric Nash equilibrium.} In the homogeneous case with $s^*_i = \bar{s}$ for all $i$, the weighted contribution received by agent $i$ is:

\begin{equation}
    \xi_i(\vec{s}^*) = \sum_{j \in N} w_{ij} s^*_j + 1 = 
    n\bar{s} + 1 =: \xi,
    \label{eq:xi_symmetric}
\end{equation}

which is identical for every agent $i \in N$. Since $\vec{s}^*$ is a Nash equilibrium of $\mathcal{M}^E$, it implements a welfare maximizer, and since its allocation is interior ($\bar{s} \in (0, \tau)$ by Lemma \ref{lemma:symmetric_NE}), it satisfies the KKT condition \eqref{eq:KKT_interior}. In the homogeneous case, substituting $w_{ki} = 1$ and $\xi_k(\vec{s}^*) = \xi$ for all $k \in N$ into \eqref{eq:KKT_interior}:

\begin{equation}
    \beta \sum_{k \in N} w_{ki} \xi_k(\vec{s}^*)^{-\beta-1} = 
    \beta n \xi^{-\beta-1} = \gamma.
    \label{eq:KKT_symmetric}
\end{equation}

Therefore $\beta\xi^{-\beta-1} = \gamma/n$. Substituting $s^*_i = \bar{s}$, $\xi_i(\vec{s}^*) = \xi$, and $\beta\xi^{-\beta-1} = \gamma/n$ into the payment rule \eqref{eq:ext_payment_explicit}:

\begin{align}
    p^E_i(\vec{s}^*) 
    &= \gamma\bar{s} - \beta\xi^{-\beta-1}(\xi - 1) \notag \\
    &= \gamma\bar{s} - \frac{\gamma}{n}(n\bar{s} + 1 - 1) 
    \notag \\
    &= \gamma\bar{s} - \frac{\gamma}{n} \cdot n\bar{s} \notag \\
    &= \gamma\bar{s} - \gamma\bar{s} = 0.
    \label{eq:payment_vanishes}
\end{align}

\medskip
\noindent\textbf{Step 3: Conclude reciprocity.} From \eqref{eq:shapley_equals_payoff} and \eqref{eq:payment_vanishes}, for every agent $i \in N$:
\begin{equation}
    a_i(\vec{s}^*) + p^E_i(\vec{s}^*) = a(\vec{s}^*) + 0 = 
    a(\vec{s}^*) = \varphi^A_i(\vec{s}^*).
\end{equation}
Therefore:
\begin{equation}
    r(\vec{s}^*) = \min_{i \in N} \frac{a_i(\vec{s}^*) +
    p^E_i(\vec{s}^*)}{\varphi^A_i(\vec{s}^*)} = 1.
\end{equation}
Since $\vec{s}^*$ was an arbitrary symmetric Nash equilibrium, this holds at every symmetric Nash equilibrium of $\mathcal{M}^E$. \qed

\section{Supplementaries of Section 8}

\subsection{Proof of Proposition \ref{prop:Ti}}
\label{app:prop_Ti}

We allow $w_{ij} \in \mathbb{R}$ for all $i, j \in N$, with $w_{ii} = 1$, and suppose the domain condition $\xi_k(\vec{s}) > 0$ holds for all $\vec{s} \in S$ and all $k \in N$. We fix agent $i \in N$ and assume $\xi_i(\vec{s}^*) \geq 1$. Under Assumptions~\ref{assump:interior} and~\ref{assump:interior_solo}, we show $T_i(\vec{s}^*) \geq 0 \iff \Delta_i(\mathcal{M}^E) \geq 0$.

\medskip
\noindent\textbf{Step 1: Simplification of $u_i^E(\vec{s}^*)$. } The derivation of $u_i^E(\vec{s}^*)$ does not depend on the signs of  $w_{ij}$, only on the payment rule \eqref{eq:ext_payment_explicit} and the payoff function \eqref{eq:payoff}. Therefore, by the same calculation as in Step 4--5 of the proof of Theorem~\ref{thm:IR_ext}:

\begin{equation}
    u_i^E(\vec{s}^*) = 1 - f(\xi_i(\vec{s}^*)), 
    \qquad v_i^* = 1 - f(s_i^{\text{solo}} + 1),
    \label{eq:prop_uv}
\end{equation}
where $f$ is defined in \eqref{eq:f_def}, so that:
\begin{equation}
    \Delta_i(\mathcal{M}^E) = u_i^E(\vec{s}^*) - v_i^* 
    = f(s_i^{\text{solo}} + 1) - f(\xi_i(\vec{s}^*)).
    \label{eq:prop_gap}
\end{equation}

\medskip
\noindent\textbf{Step 2: Relating $T_i(\vec{s}^*)$ to $\xi_i(\vec{s}^*)$ and $s_i^{\text{solo}} + 1$. } Since $\vec{s}^*$ implements an interior social optimum under Assumption~\ref{assump:interior}, the KKT stationarity condition \eqref{eq:KKT_interior} holds with equality. Isolating the $k = i$ term using $w_{ii} = 1$:

\begin{equation}
    \beta \xi_i(\vec{s}^*)^{-\beta-1} + \beta \sum_{k \neq i} 
    w_{ki} \xi_k(\vec{s}^*)^{-\beta-1} = \gamma_i,
\end{equation}
which, using the definition of $T_i(\vec{s}^*)$ from \eqref{eq:Ti}, gives:

\begin{equation}
    \beta \xi_i(\vec{s}^*)^{-\beta-1} = \gamma_i - \beta T_i(\vec{s}^*).
    \label{eq:prop_kkt}
\end{equation}

Under Assumption~\ref{assump:interior_solo}, the standalone FOC \eqref{eq:solo_foc} gives $\gamma_i = \beta(s_i^{\text{solo}}+1)^{-\beta-1}$. Substituting into \eqref{eq:prop_kkt} and dividing by $\beta$:

\begin{equation}
    \xi_i(\vec{s}^*)^{-\beta-1} - (s_i^{\text{solo}}+1)^{-\beta-1} 
    = -T_i(\vec{s}^*).
    \label{eq:prop_star}
\end{equation}

Since $x \mapsto x^{-\beta-1}$ is strictly decreasing on $(0, \infty)$, and both $\xi_i(\vec{s}^*) > 0$ (by the domain condition) and $s_i^{\text{solo}} + 1 > 0$:

\begin{equation}
    T_i(\vec{s}^*) \gtrless 0 
    \iff \xi_i(\vec{s}^*)^{-\beta-1} \lessgtr 
    (s_i^{\text{solo}}+1)^{-\beta-1}
    \iff \xi_i(\vec{s}^*) \gtrless s_i^{\text{solo}} + 1.
    \label{eq:prop_xi_compare}
\end{equation}

\medskip
\noindent\textbf{Step 3: Conclude. } By hypothesis, $\xi_i(\vec{s}^*) \geq 1$, and by Assumption~\ref{assump:interior_solo}, $s_i^{\text{solo}} > 0$ so $s_i^{\text{solo}} + 1 > 1$. Thus both arguments of $f$ in \eqref{eq:prop_gap} lie in $[1, \infty)$. Since $f'(x) = \beta(\beta+1)x^{-\beta-2}(1-x) \leq 0$ for all $x \geq 1$ with equality only at $x = 1$, $f$ is strictly decreasing on $[1, \infty)$.Therefore:

\begin{equation}
    \xi_i(\vec{s}^*) \gtrless s_i^{\text{solo}} + 1 
    \iff f(\xi_i(\vec{s}^*)) \lessgtr f(s_i^{\text{solo}} + 1) 
    \iff \Delta_i(\mathcal{M}^E) \gtrless 0.
    \label{eq:prop_f_compare}
\end{equation}

Chaining \eqref{eq:prop_xi_compare} and \eqref{eq:prop_f_compare}:

\begin{equation}
    T_i(\vec{s}^*) \gtrless 0 \iff \Delta_i(\mathcal{M}^E) \gtrless 0,
\end{equation}

which gives the stated result $T_i(\vec{s}^*) \geq 0 \iff \Delta_i(\mathcal{M}^E) \geq 0$. \qed

\section{MNIST Simulation Setup}
\label{app:sim}
Three agents each hold 400 MNIST images (10 classes, 20\% test split, initial contribution $s_i^{(0)} = 150$) and train a three-layer MLP ($784 \to 128 \to 64 \to 10$) with ReLU and log-softmax via SGD (lr $= 0.05$, momentum $= 0.9$, 1 local epoch) over 50 rounds. For $\mathcal{M}^{Shap}$, best-response dynamics use step size $10^3$, gradient perturbation $\varepsilon = 10$, and a stability threshold of $5.0$ over 3 rounds, with Shapley values estimated via 300 Monte Carlo permutations. The $\mathcal{M}^E$ equilibrium is found by grid search over $\pm 100$ samples in steps of 20, training each candidate from scratch for 20 epochs.
\end{document}